\newtheorem{procedure}{Procedure}[section]
\newtheorem{theorem}{Theorem}[section]
\newtheorem{lemma}{Lemma}[section]
\newtheorem{corollary}{Corollary}[section]
\newtheorem{remark}{Remark}[section]
\newtheorem{example}{Example}[section]
\newtheorem{cor}{Corollary}[section]
\numberwithin{equation}{section}
\begin{document}

\noindent {\sffamily\bfseries\Large  Post-selection estimation and testing following aggregated association tests }

\noindent%
\textsf{Ruth Heller}, Department of
Statistics and Operations Research, Tel-Aviv university, Tel-Aviv  6997801, 
Israel,  \textsf{E-mail:} ruheller@gmail.com\\

\textsf{Amit Meir},
Department of statistics, University of Washington, Seattle, WA, \textsf{E-mail:}
amitmeir@uw.edu.\\

\textsf{Nilanjan Chatterjee}, Department of Biostatistics, Bloomberg School of Public Health, and Department of Oncology, School of Medicine, Johns Hopkins University, Baltimore, MD 21205, U.S.A.,\textsf{E-mail:} nchatte2@jhu.edu

\begin{abstract}
The practice of pooling several individual test statistics to form aggregate tests is common in many statistical application where individual tests may be underpowered.  
While selection by aggregate tests can serve to increase power, the selection process invalidates the individual test-statistics, making it difficult to identify the ones that drive the signal in follow-up inference. Here, we develop a general approach for valid inference following selection by aggregate testing.  We  present novel powerful post-selection tests for the individual null hypotheses which are exact for the normal model and asymptotically justified otherwise. Our approach relies on the ability to characterize the distribution of the individual test statistics after conditioning on the event of selection. We provide efficient algorithms for estimation of the post-selection maximum-likelihood estimates and suggest confidence intervals which rely on a novel switching regime for good coverage guarantees. We validate our methods via  comprehensive simulation studies and apply them to data from the Dallas Heart Study, demonstrating that single variant association discovery following selection by an aggregated test is indeed possible in practice.
\end{abstract}

% \keywords{selective inference; aggregate tests; conditional $p$-value;  conditional confidence interval;   multiple testing}

\section{Introduction}

Many modern scientific investigations involve simultaneous testing of many thousands of hypotheses. Valid testing of large number of hypotheses requires strict multiple-testing adjustments, making it difficult to identify signals in the data if the signal is weak or sparse. One possible remedy is to pool groups of related test statistics into aggregate tests. This practice reduces the amount of multiplicity correction that needs to be applied and may assist in identifying weak signals that are spread over a number of test statistics. However, once an `interesting' group of hypotheses has been identified, it may also be of interest to perform inference within the group in order to identify the individual test statistics that drive the signal. 

In many scientific fields, there exist a natural predefined grouping of features of interest. In neuroscience, functional magnetic resonance imaging (fMRI) studies aim to identify the locations of activation while a subject is involved in a cognitive task. The individual null hypotheses of no activation are at the voxel level, and regions of interest can be tested for activation by aggregating the measured signals at the voxel level \citep{Penny03, Benjamini07}. Following identification of the regions of interest, it is meaningful to localize the signal within the region.  In microbiome research, the operational taxonomic units (OTUs) are grouped into taxonomic classifications such as species level and genus level. The data for the individual null hypotheses of no association between OTU and phenotype can be aggregated in order to test the null hypotheses of no association at the species or genus level \citep{Bogomolov17}. Here as well, following identification of the family associated with the phenotype, it is of interest to identify the OTUs within the family that drive the association. In genome-wide association studies (GWAS), the disease being analyzed may have multiple subtypes of interest. The standard analysis aim is to identify the SNPs associated with the overall disease, but another important aim is to identify associations with specific sub-types of the disease \citep{Bhattacharjee12}. In genetic association studies, there is also a natural grouping of the genome, since genes are comprised of single variants. The test statistics of single variants within a gene can be aggregated into a test statistic for powerful identification of associations at the gene level \citep{Wu11, Bhattacharjee12, Derkach14, Yoo16}. Following identification at the gene level, it may be of interest to identify the single variants within the gene that drive the association.  

For a single group of features, let $\hat {\vec \beta}= (\hat \beta_1,\ldots, \hat \beta_m)'$ be the estimator for the vector of parameters of interest in the group, $\vec \beta$. 
Much research has focused on developing powerful aggregate tests for selecting the groups of interest, i.e., for testing at the group level the null hypothesis that $\vec \beta=\vec 0$. When $\hat {\vec \beta}$ has (approximately) a known covariance  and a normal distribution, classical test-statistics are the score, Wald, and likelihood ratio statistics, all of which have an asymptotic $\chi^2_m$ distribution.   A recent example is the work by \cite{Reid16}, which suggested novel tests that improve on classical tests. Other examples come from the field of statistical genetics, where many gene level tests have been recently proposed based on weighted linear or quadratic combinations of score statistics for analyzing genomic studies of  rare variants, see \cite{Derkach14} for a review. 

In this work we seek to develop methods for conducting inference on the coordinates of $\vec \beta$ following selection by an aggregate test. Failure to account for data driven selection of any kind can lead to biased inference. For example, in linear regression, if the relationship of the predictors with a response is assumed linear for a group, then selection by an aggregate test constraints the response vector to values for which the aggregate test $p$-value is below a threshold, and the post-selection distribution of the data is no longer multivariate normal but a truncated multivariate normal. Generally speaking, ignoring the selection will result in biased inference if there is dependence between the selection event and the individual test-statistic: if the individual test-statistic contributes to the selection by the aggregate test, then conditional on being selected the distribution of the individual test statistic is changed. 

Inference following selection is an emerging field of research, which is of great interest both in the statistics community and in application fields. In the multiple testing literature, \cite{Benjamini14} presented a  novel approach for  the problem of inference within families of hypotheses following selection of the families by any selection rule.  Marginal confidence intervals following selection are addressed in generality in \cite{Benjamini05},  from a  Bayesian perspective in \cite{Yekutieli12}, and for a specific selection rule (i.e., that the test statistic is larger than a certain threshold) in \cite{Weinstein13}. Significant progress has also been made in the regression context, where variables are first selected into the model, and inference on the selected variables follows. Failing to account for the data-driven variable selection process invalidates the inference \citep{Potscher91, Berk13, Fithian15}. Recent valid post model-selection procedures can be found in \cite{Berk13, Fithian15, Lee14, Lee16, Taylor16, Tian16, Meir17}.

Recently, \cite{Heller16} addressed the problem of identifying the individual studies with association with a feature, following selection of potential features by a meta-analysis of multiple independent studies.  We generalize the work of \cite{Heller16} to allow for (approximately) known dependence across the individual test statistics. In particular, this allows for valid testing of predictors in a generalized linear model that was selected via an aggregated test . We further develop methods for obtaining post-selection point estimates and confidence intervals.  We discuss the computation of maximum-likelihood estimates following aggregate testing and show that in the special case of aggregate testing with a Wald test, the problem of computing the multivariate conditional maximum likelihood estimator can be cast as a simple line-search problem. We also  discuss computation of post-selection confidence intervals which are based on inversion of the post-selection tests. Finally, we develop regime switching post-selection tests and confidence intervals that adapt to the unknown underlying sparsity of the signal, and thus have good power when the signal is sparse as well as when it is non-sparse. 

The paper is organized as follows. In \S~\ref{sec-model} we formally introduce our inference framework and goals. We develop theory for post-selection testing and estimation in \S~\ref{sec-testing} and \S~\ref{sec-estimation}, respectively.  We conduct empirical evaluation of our test-statistics and post-selection estimates in \S~\ref{sec-sim}. In \S~\ref{sec-motivating-example}, we apply our methods to a genomic application. Finally, \S~\ref{sec-discuss} concludes.

\section{The set-up and the inferential goals} \label{sec-model}
Let $\hat{\vec\beta} \sim N(\vec\beta, {\bf\Sigma})$ with $\bf\Sigma$ known, and suppose that we are interesting in performing inference on $\vec\beta\in \mathcal{R}^{m}$ if and only if we can reject an aggregate test for the global-null hypothesis that $\vec\beta = \vec 0$. For testing the global-null hypothesis, we use a quadratic test of the form $S = \hat{\vec\beta}'K\hat{\vec\beta} > S_{1-t_1} $ where $K$ is a semi positive-definite matrix and $S_{1-t_1}$ is the $1-t_1$ quantile of $S$ under the null-hypothesis. Setting ${\bf K} = {\bf \Sigma}^{(-1)}$ results in the well known Wald test statistic. The developments when group selection is by a linear aggregate test $S = a'\hat{\vec\beta}$ are similar and detailed in Appendix \ref{sec-linear-aggregate-test}. 

The value of $t_1$ comes from the analysis at the group level. For example, in genomics, when the group is the gene, then typically $t_1 \approx \alpha/20000$. This is because the Bonferroni procedure is commonly used for identifying genes associated with phenotypes using aggregate tests, so the FWER on the family of $\sim 20,000$ genes is controlled at level $\alpha$.

Given that an aggregate test has been rejected at a level $t_1$, our aim is to infer on the  parameters $\beta_1,\ldots, \beta_m$. For $j\in \{1,\ldots,m\}$, let  
$$H_{j}: \beta_j=0.$$ 
Our first aim is to test the family of hypotheses $\{H_j: j=1,\ldots,m \}$ if $p^G\leq t_1$, with FWER or FDR control. The conditional FWER and FDR (introduced in \citealp{Heller16}) for the selected group are, respectively, 
  $E(I[V>0]|S>S_{1-t_1})$ and 
  $E(V/\max\{R,1 \}|S>S_{1-t_1}),$
where $V$ and $R$ are the number of false and total rejections in the group. We provide procedures for conditional FWER/FDR control in \S~\ref{sec-testing}. 

Our second aim is to estimate the magnitude of the regression coefficients   $\beta_1,\ldots, \beta_m$ given selection. Denoting the likelihood for $\vec \beta$ by   $\mathcal L(\vec \beta)$, the conditional likelihood can be written as:
$$
\mathcal{L}(\vec \beta| S > S_{1-t_1}) = 
\frac{\mathcal L(\vec \beta)}{P_{\vec \beta}( S > S_{1-t_1})} I\{S > S_{1-t_1}\}.
$$
We propose to use the maximizer of the conditional likelihood as a  point estimate, and we show how to obtain it, as well as confidence intervals, in \S~\ref{sec-estimation}.  

\begin{example} \label{rem-glm} Generalized Linear Models.\emph{ 
Suppose we observe a response vector $\vec y = (y_1,\ldots, y_n)'\in \mathcal R^n$, and  $m$ predictors of interest in a group (e.g., the single variants in a gene),  $\vec X_j, j=1, \ldots,m$. Let $\vec V_j, j=1,\ldots, k$ be a set of additional covariates to be accounted for in the model (e.g., environmental factors or ancestry variables in GWAS).
Suppose that we are interested in modelling the relationship of the predictors in a group with the response vector using a generalized linear model. So, we assume that $y_i \sim f_{\theta_i}$, an exponential family distribution with canonical parameter $\theta = g^{-1}(\eta_i) \in \Theta$ for some continuous link function $g:\Theta\rightarrow \mathcal{R}$ and 
\begin{equation}\label{eq-linear-model-2}
\vec \eta = \alpha_0 + \sum_{l=1}^k {\vec V}_l \alpha_l+  \sum_{j=1}^m {\vec X}_j \beta_j.%+\vec \epsilon
\end{equation}
In the case of linear regression, $g(\eta_i) = \eta_i$ is the identity function and $y_i \sim N(\eta_i, \sigma^{2})$. If $\vec X_1,\ldots,\vec X_m$  explain little of the variance in $\vec y$ (e.g., in genomic applications) it is reasonable to estimate $\sigma^{2}$ by the empirical variance of the residuals from the linear model with $\vec\beta = 0$.
}

\emph{
When $\vec y$ is not assumed to be normal, the maximum likelihood estimator for the regression coefficients has an asymptotic normal distribution $\sqrt{n}(\hat{\vec\beta} - \vec\beta) \rightarrow^D N(0, {{\bf I}^{-1}}(\vec\alpha, \vec\beta))$ 
and an asymptotic truncated-normal distribution post-selection. While ${{\bf I}^{-1}}(\vec\alpha, \vec\beta)$ depends on $\vec\beta$ and therefore cannot be assumed to be known in general, if $\vec X_1,\ldots,\vec X_m$ explain little of the variance in $\vec y$ it is reasonable to estimate the variance of $\hat{\vec\beta}$ under the assumption that $\vec\beta = 0$. More generally, \citet{tibshirani2015uniform} discuss conditions under which naive plug-in estimation $\widehat{{\bf I}^{-1}(\vec \alpha, \vec \beta)} = {\bf I}^{-1}(\hat{\vec\beta},\hat{\vec\alpha})$ leads to asymptotically valid inference. 
}
\end{example}

\section{Testing following selection}\label{sec-testing}

In the absence of selection, we can test for $H_j: \beta_j=0$ using the $p$-value of the test statistic $\hat \beta_j/SE_j$: $p_j = 2(1-\Phi(|\hat \beta_j/SE_j|))$ where $SE_j =  \sqrt{\vec e_j'{\bf \Sigma} \vec e_j}$ and $\vec e_{j}$ is the $m\times 1$ unit vector with a single entry of one in position $j\in \{1,\ldots,m\}$. However, conditionally on selection, $P_j$ will often have a distribution that is stochastically smaller than uniform, meaning that its realization $p_j$ will no longer be a valid $p$-value for testing $H_j$. 

To correct for selection, it appears necessary to evaluate the probability that $S\geq S_{1-t_1}$. However, this probability depends on the unknown $\vec \beta$, and hence it  cannot be evaluated when $H_j$ is true unless we assume that all other entries in $\vec \beta$ are zero. In the special case of $\vec \beta = \vec 0$ the distribution of $\hat \beta_j/SE_j$, conditional on $S\geq S_{1-t_1}$ is known. Of course, in practice we do not know whether any of the entries of $\vec \beta$ are non-zero.  

In \S~\ref{subsec-poly} we suggest a way around this problem, by computing a valid conditional $p$-values using the polyhedral lemma first introduced by \cite{Lee16}. In practice, we find that statistical tests based on the polyhedral lemma tend to have relatively low power if $\vec\beta$ is sparse, and thus, in \S~\ref{subsec-hybrid} we suggest an inference method that automatically adapts to the sparsity level of $\vec\beta$. In \S~\ref{subsec-mtp} we discuss applying multiple testing procedures to the valid conditional $p$-values.

\subsection{The conditional $p$-values based on the polyhedral lemma}\label{subsec-poly}
Performing inference on $\vec\beta$ is difficult because the post-selection distribution of $\hat{\vec\beta}$ is not location invariant and depends heavily on the unknown parameter value. Suppose that we are interested in performing inference on an arbitrary  linear function of the parameter vector $\vec{\eta}'\vec{\beta}$. \cite{Lee16} showed that by conditioning on additional information beyond the selection event, the post-selection distribution of a single coordinate $\hat\beta_j$ can be reduced to a univariate truncated normal distribution which only depends on single unknown parameter $\vec\eta'\vec\beta$. Furthermore, \cite{Fithian15} showed that such conditioning yields a (unique) family of admissible unbiased post-selection tests. 

Denote by $TN(\mu, \sigma^2, \mathcal A)$ the truncated normal distribution constrained to $\mathcal A \subseteq \Re$, i.e. the conditional distribution of a $N(\mu, \sigma^2)$ random variable conditional on it being in  $\mathcal A$.  Let $F^{\mathcal A}_{\mu, \sigma^2}$ be the CDF of $TN(\mu, \sigma^2, \mathcal A)$. The following theorem, which is a direct result of the polyhedral lemma of \cite{Lee16}, provides us with a conditional distribution of any linear contrast of $\hat{\vec \beta}$ that we can use for post-selection inference. 
\begin{theorem}\label{thm1}
Let $\vec \eta' \hat {\vec \beta}$ be a linear combination of $\hat {\vec \beta}$ , and $t_1\in (0,1]$ a fixed selection threshold. Let $\vec W = ({\bf I}_{m}-c{\vec \eta}'){\hat {\vec \beta}}$, where $c = ({\vec \eta}'{\bf \Sigma}{\vec \eta})^{-1}{\bf \Sigma} {\vec \eta}$ and ${\bf I}_{m}$ is the $m\times m$ identity matrix. 
Then 
\begin{equation}\label{eq-TN}
\vec \eta'\hat {\vec \beta} \mid S \geq S_{1-t_1} , \vec W \sim TN(\vec \eta'  \vec \beta, \vec \eta'{\bf \Sigma} \vec \eta, \mathcal A(\vec W)), 
\end{equation}
where  $\mathcal A(\vec W)$ is defined in Lemma \ref{lem1}. 
\end{theorem}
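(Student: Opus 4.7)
The plan is to follow the standard decomposition strategy behind the polyhedral lemma of \cite{Lee16}, adapted to a quadratic rather than linear selection event. First I would write down the algebraic identity
\[
\hat{\vec\beta} \;=\; c\bigl(\vec\eta'\hat{\vec\beta}\bigr) + \vec W,
\]
which is immediate from the definitions of $c$ and $\vec W$. Second, I would verify that under the unconditional law $\hat{\vec\beta}\sim N(\vec\beta,{\bf\Sigma})$, the scalar $\vec\eta'\hat{\vec\beta}$ and the vector $\vec W$ are independent. The covariance calculation is
\[
\operatorname{Cov}\!\bigl(\vec\eta'\hat{\vec\beta},\, \vec W\bigr)
= \vec\eta'{\bf\Sigma}\bigl({\bf I}_m - c\vec\eta'\bigr)'
= \vec\eta'{\bf\Sigma} - (\vec\eta'{\bf\Sigma}\vec\eta)\,c'
= \vec\eta'{\bf\Sigma} - \vec\eta'{\bf\Sigma} = \vec 0,
\]
using $c = (\vec\eta'{\bf\Sigma}\vec\eta)^{-1}{\bf\Sigma}\vec\eta$, and independence then follows from joint normality. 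Consequently, conditional on $\vec W$, we have $\vec\eta'\hat{\vec\beta}\sim N(\vec\eta'\vec\beta,\vec\eta'{\bf\Sigma}\vec\eta)$.

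Third, I would rewrite the selection event in the new coordinates. Substituting the decomposition above into $S=\hat{\vec\beta}'K\hat{\vec\beta}$ and letting $Z=\vec\eta'\hat{\vec\beta}$ yields the scalar quadratic
\[
S(Z,\vec W) \;=\; (c'Kc)\,Z^2 + 2(c'K\vec W)\,Z + \vec W'K\vec W,
\]
whose coefficients depend only on $K$, $\vec\eta$, ${\bf\Sigma}$, and $\vec W$. Thus $\{S\geq S_{1-t_1}\}$ is equivalent, given $\vec W$, to $Z$ belonging to a (deterministic, $\vec W$-measurable) subset $\mathcal A(\vec W)\subseteq\Re$ obtained by solving a one-dimensional quadratic inequality; this is precisely the set one identifies in Lemma \ref{lem1}, generically the complement of a bounded interval.

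Finally, I would combine these ingredients. Conditioning $Z\mid\vec W\sim N(\vec\eta'\vec\beta,\vec\eta'{\bf\Sigma}\vec\eta)$ further on the event $\{Z\in\mathcal A(\vec W)\}$ produces the truncated normal in \eqref{eq-TN} by the definition of $TN$. The only nontrivial step is the explicit description of $\mathcal A(\vec W)$ — i.e., reading off the roots and the sign of the leading coefficient $c'Kc$ of the quadratic — and handling the degenerate cases where $c'Kc=0$ or the quadratic fails to cross the threshold; I expect this bookkeeping to be the main obstacle, and I would offload it to Lemma \ref{lem1}. Everything else reduces to the covariance identity and the conditional independence argument above.
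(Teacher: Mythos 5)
Your proposal is correct and follows essentially the same route as the paper's proof in Appendix \ref{app-thm1-proof}: the decomposition $\hat{\vec\beta}=\vec W+c\,\vec\eta'\hat{\vec\beta}$, the rewriting of $\{S\geq S_{1-t_1}\}$ as a scalar quadratic inequality in $\vec\eta'\hat{\vec\beta}$ with $\vec W$-measurable coefficients (Lemma \ref{lem1}), and the zero-covariance/joint-normality argument giving independence of $\vec\eta'\hat{\vec\beta}$ and $\vec W$, hence the truncated normal conditional law. Your explicit covariance computation is a welcome addition, as the paper only asserts that the covariance vanishes.
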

See Appendix \ref{app-thm1-proof} for the proof. 

Since the only unknown parameter in the truncated distribution of \eqref{eq-TN} is $\vec \eta'  \vec \beta$, it is straightforward to compute a $p$-value under the null hypothesis and construct confidence intervals via test inversion. 
\begin{cor}\label{cor1}
For the estimation of $\beta_j$, let 
 $\vec W= {\vec W_j} = ({\bf I}_{m}-c{\vec e_j}'){\hat {\vec \beta}}$, $c = ({\vec e_j}'{\bf \Sigma} {\vec e_j})^{-1}{\bf \Sigma} {\vec e_j}$, and  $\mathcal A(\vec W)$ as defined in Lemma \ref{lem1}. Then, 
$$
P\left( 
\beta_j \in \{b: \; \alpha / 2 \leq F^{\mathcal A}_{b, {\vec e_j}'{\bf \Sigma} {\vec e_j}}(\hat\beta_j) \leq 1 - \alpha /2\} 
\right) =1 - \alpha.
$$
For testing $H_j$, let
\begin{equation}\label{eq-condpv-poly}
P_j' = 1 - F^{\mathcal A}_{0, {\vec e_j}'{\bf \Sigma} {\vec e_j}}(\hat\beta_j).
\end{equation} Then, if $H_j$ is true, 
$$
P_j' | S> S_{1-t_1}, \vec W_j \sim U(0,1).
$$
\end{cor}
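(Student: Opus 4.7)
The plan is to read this off as a direct specialization of Theorem \ref{thm1}, combined with the probability integral transform. Setting $\vec\eta = \vec e_j$ in Theorem \ref{thm1} identifies $\vec\eta'\hat{\vec\beta} = \hat\beta_j$, $\vec\eta'\vec\beta = \beta_j$, $\vec\eta'{\bf\Sigma}\vec\eta = \vec e_j'{\bf\Sigma}\vec e_j$, and makes the vector $c = (\vec e_j'{\bf\Sigma}\vec e_j)^{-1}{\bf\Sigma}\vec e_j$ as well as $\vec W_j = ({\bf I}_m - c\vec e_j')\hat{\vec\beta}$ match the definitions in the statement of the corollary. Theorem \ref{thm1} therefore supplies the conditional distributional identity
\[
\hat\beta_j \mid \{S \geq S_{1-t_1}\},\, \vec W_j \;\sim\; TN\bigl(\beta_j,\; \vec e_j'{\bf\Sigma}\vec e_j,\; \mathcal A(\vec W_j)\bigr).
\]

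For the $p$-value claim, I would appeal to the fact that the truncated normal is absolutely continuous on $\mathcal A(\vec W_j)$, so its CDF is continuous and strictly increasing on this set. By the probability integral transform, applying the CDF to the random variable produces a uniform variable: under $H_j:\beta_j=0$, the relevant CDF is $F^{\mathcal A}_{0,\vec e_j'{\bf\Sigma}\vec e_j}$, hence
\[
F^{\mathcal A}_{0,\vec e_j'{\bf\Sigma}\vec e_j}(\hat\beta_j) \mid \{S>S_{1-t_1}\},\, \vec W_j \;\sim\; U(0,1),
\]
and a one-minus transformation preserves this uniformity, giving $P_j'\mid \{S>S_{1-t_1}\},\vec W_j \sim U(0,1)$.

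For the confidence set, the same integral transform with the true (unknown) $\beta_j$ in place of $0$ yields $F^{\mathcal A}_{\beta_j,\vec e_j'{\bf\Sigma}\vec e_j}(\hat\beta_j) \mid \{S>S_{1-t_1}\},\vec W_j \sim U(0,1)$, so that
\[
P\bigl(\alpha/2 \leq F^{\mathcal A}_{\beta_j,\vec e_j'{\bf\Sigma}\vec e_j}(\hat\beta_j) \leq 1-\alpha/2 \,\big|\, S>S_{1-t_1}, \vec W_j\bigr) = 1-\alpha.
\]
The event in the probability coincides with $\beta_j$ lying in the displayed set (an inversion of a two-sided acceptance region), and the tower property then propagates the conditional coverage to the marginal (post-selection) coverage statement in the corollary.

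There is no real obstacle: the content of the corollary is essentially a restatement of Theorem \ref{thm1} in the coordinate direction $\vec e_j$, so the only points that require care are routine, namely that $F^{\mathcal A}_{\mu,\sigma^2}$ is continuous and strictly monotone in its argument on $\mathcal A(\vec W_j)$ (ensuring the probability integral transform applies without atoms), that $\hat\beta_j \in \mathcal A(\vec W_j)$ almost surely on the selection event (guaranteed by the construction of $\mathcal A$ in Lemma \ref{lem1}), and that conditional $1-\alpha$ coverage for every realization of $\vec W_j$ implies unconditional $1-\alpha$ coverage after marginalization.
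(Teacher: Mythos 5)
Your proposal is correct and follows the same route the paper takes: Corollary \ref{cor1} is obtained by specializing Theorem \ref{thm1} to $\vec\eta = \vec e_j$ and then applying the probability integral transform for the $p$-value and test inversion (with marginalization over $\vec W_j$) for the confidence set. The points you flag as requiring care --- continuity and strict monotonicity of $F^{\mathcal A}_{\mu,\sigma^2}$ on $\mathcal A(\vec W_j)$, membership of $\hat\beta_j$ in $\mathcal A(\vec W_j)$ on the selection event, and the tower-property step --- are exactly the routine details the paper leaves implicit.
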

The following example serves to give some intuition as to how the polyhedral lemma works and the possible adverse effects of the extra conditioning on $\vec W$. 

\begin{example} Independence Model. \emph{
Let $\hat{\vec\beta}\sim N(\vec\beta,{\bf I}_m)$ and suppose that we are interested in testing $H_{1}: \beta_1 = 0$ after rejecting the global-null hypothesis that $\vec\beta = 0$. In this case, the relevant contrast is $\vec\eta = \vec e_1$ and the orthogonal projection is $\vec W = (0,\hat\beta_2,...,\hat\beta_m)$. It is clear that $\hat\beta_1$ is independent of $\vec W$ and therefore, conditionally on selection the only relevant information contained in $\vec W$ is that $\hat\beta_1 > S_{1-t_1} - \sum_{j=2}^{m}\hat\beta_j^{2}$ so $\mathcal{A} = \{b:\; b^{2} - S_{1-t_1} + \sum_{j=2}^{m}\hat\beta_j^{2} > 0\}$. Note that if we do not condition on $\vec W$ then the support of $\hat\beta_1| S>S_{1-t_1}$ is $\Re$ and this is why conditioning on $\vec W$ often results in a loss of efficiency \citep{Fithian15}.
}
\end{example}

\subsection{A hybrid conditional $p$-value}\label{subsec-hybrid}
Our empirical investigation in Section \S~\ref{sec-sim} suggests that $p$-values that are computed based on the polyhedral lemma tend to have good power when $\vec\beta$ is not sparse or has a large magnitude. However,  when only a single entry in $\vec \beta$ is nonzero, $p$-values based on the polyhedral lemma (which are valid for any configuration of the unknown $\vec \beta$) tend to be considerably less powerful than $p$-values computed based on the distribution of $\hat{\vec\beta}$ under the global-null distribution (where it is assumed that $\vec \beta= \vec 0$). Therefore, we would like to consider a test that adapts to the unknown sparsity of the signal, by combining  the two approaches for computing $p$-values into a single test of $H_j$, allowing for powerful identification of the non-null coefficients. The combined test will be useful in applications where both groups with sparse signals and with non-sparse signals are likely.

Sampling from the truncated multivariate normal distribution is a well studied problem, see for example \cite{Pakman14}. Specifically, under the global null, i.e., $\vec \beta = \vec 0$, one can use samples from the truncated distribution to asses the likelihood of the observed regression coefficients, defining
\begin{equation}\label{eq-condpv-GN}
p'_{j,GN} =Pr_{\vec \beta  = \vec 0}\left(P_j\leq p_j \mid S>S_{1-t_1}\right) = \frac{1}{t_1}Pr_{\vec \beta  = \vec 0}\left(P_j\leq p_j , S>S_{1-t_1}\right),
\end{equation}
$j=1,\ldots, m.$  
Under the global-null distribution, both $P'_j$ and the $p$-value computed under the global-null distribution $P'_{j,GN}$ have a uniform distribution. However, $p'_j$ may be larger than $p'_{j, GN}$ because it requires extra conditioning on $\vec W$. Thus, if the only non-zero predictor in the model is the $j$th predictor, the test based on $P'_{j,GN}$ can be expected to be more powerful than the test based on $P'_j$. 

On the other hand, when more than one of the coordinates of $\vec\beta$ are non-zero, $p'_{j,GN}$ will often be substantially larger than the original $p$-value $p_j$, e.g., if $\hat \beta_j^2/SE_j^2\geq S_{1-t_1}$, then $p'_{j,GN} = p_j/t_1$. This, while $P'_j$ may not suffer any additional loss of power due to the extra conditioning  e.g., if the aggregate test passes the selection threshold $t_1$  regardless of the value of $p_j$, then $p'_j = p_j$ and it will clearly be smaller than $p'_{j,GN}$.

Since the preference for using $p'_{j,GN}$ instead of $p'_j$ depends on the (unknown) $\vec \beta$, we suggest the following  test that combines the two valid post-selection $p$-values, 
\begin{equation}\label{eq-condpv-hybrid}
p'_{j, hybrid} = 2\min (p'_{j}, p'_{j, GN}).
\end{equation}

Clearly, $p'_{j, hybrid}$ would be a valid $p$-value, i.e., with a null distribution that is either uniform or stochastically larger than uniform,  if both $p'_j$ and $p'_{j,GN}$ are valid $p$-values. In the previous section we indeed showed that $p'_j$ is a valid $p$-value. But by the definition in equation \eqref{eq-condpv-GN} it is only clear that $p'_{j,GN}$ is valid when $\vec \beta=0$. Intuitively, for $\vec \beta \neq 0$, we may assume that $p'_{j,GN}$ is conservative (i.e., has a null distribution that is stochastically larger than uniform).  We shall now provide a rigorous justification.

We start with the special case that the quadratic aggregate test for selection is Wald's test. Following selection by Wald's test, $P'_{j, hybrid}$ is a valid $p$-value for testing $H_j: \beta_j = 0$. This follows by showing that  the marginal null distribution of $P'_{j,GN}$  is at least stochastically as large as the uniform, so the test based on the global null distribution where  $\vec \beta = \vec 0$ is conservative.  
\begin{theorem}\label{thm-WALD-GN}
If $S = \hat {\vec \beta}' {\bf \Sigma}^{(-1)}\hat {\vec \beta}$, and $\hat{\vec \beta}$ has a normal distribution with mean $\vec \beta$ and variance ${\bf \Sigma}$, then $$Pr_{(\beta_1,\ldots,\beta_{j-1},0,\beta_{j+1}, \ldots, \beta_m)}\left( P'_{j,GN}\leq x \right)\leq x \quad \forall x\in [0,1].$$
\end{theorem}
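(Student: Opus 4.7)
\emph{Proof proposal.} The plan hinges on the special additive decomposition enjoyed by Wald's test. Writing $Z_j = \hat\beta_j/SE_j$, $c = {\bf\Sigma}\vec e_j/SE_j^2$ and $\vec W_j = ({\bf I}_m - c\vec e_j')\hat{\vec\beta}$, a direct computation using the identities $\vec e_j'\vec W_j = 0$ and $c'{\bf\Sigma}^{-1}c = 1/SE_j^2$ yields
$$S = \hat{\vec\beta}'{\bf\Sigma}^{-1}\hat{\vec\beta} = Z_j^{\,2} + S_W, \qquad S_W := \vec W_j'{\bf\Sigma}^{-1}\vec W_j,$$
and the vanishing covariance $\mathrm{Cov}(\hat\beta_j,\vec W_j)=0$ gives $Z_j\perp S_W$ by multivariate normality. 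Substituting this into
$$P'_{j,GN} \;=\; t_1^{-1}\,\Pr_{\vec 0}\!\bigl(|Z^*|\ge|Z_j|,\;(Z^*)^2 + S_W^* > S_{1-t_1}\bigr),$$
with $Z^*\sim N(0,1)$ and $S_W^*\sim\chi^2_{m-1}$ independent fresh variables, shows that $P'_{j,GN}$ depends on the data only through $|Z_j|$; write $P'_{j,GN} = g(|Z_j|)$, and note that $g$ is continuous and strictly decreasing.

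Under any $\vec\beta$ with $\beta_j = 0$, the marginal law of $Z_j$ is $N(0,1)$ regardless of $\vec\beta_{-j}$, so $\Pr_\beta(P'_{j,GN}\le x)$ equals the unconditional probability computed under $\vec 0$. Let $T_0$ denote the half-normal CDF of $|Z_j|$ and $T$ its CDF conditional on the selection event $E = \{S>S_{1-t_1}\}$, both under $\vec 0$. Because $P'_{j,GN}\mid E$ is exactly $U[0,1]$ under $\vec 0$ by the probability integral transform, one obtains $g^{-1}(x) = T^{-1}(1-x)$ and hence
$$\Pr_\beta\!\bigl(P'_{j,GN}\le x\bigr) \;=\; \Pr_{\vec 0}\!\bigl(|Z_j|\ge g^{-1}(x)\bigr) \;=\; 1 - T_0\!\bigl(T^{-1}(1-x)\bigr).$$
The conservativeness claim therefore reduces to the stochastic ordering $T(z)\le T_0(z)$ for every $z\ge 0$, i.e., $|Z_j|$ is stochastically larger conditional on the selection event.

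This stochastic ordering is the only real probabilistic content and is the step I would flag as the main (if brief) obstacle. I would prove it by Chebyshev's association inequality. Setting $A = \{|Z_j|\le z\}$ and $B = E$, and conditioning on $Z_j$, one has $\Pr(B\mid Z_j) = \Pr(S_W > S_{1-t_1} - Z_j^{\,2})$ (using $Z_j\perp S_W$), which is non-decreasing in $|Z_j|$, while $\mathbf 1_A$ is non-increasing in $|Z_j|$. The law of total covariance then gives
$$\mathrm{Cov}(\mathbf 1_A,\mathbf 1_B) \;=\; \mathrm{Cov}\!\bigl(\mathbf 1_A,\Pr(B\mid Z_j)\bigr) \;\le\; 0,$$
so $\Pr(A\cap B)\le\Pr(A)\Pr(B)$ and $T(z)\le T_0(z)$. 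The entire argument rests on the additive decomposition $S = Z_j^{\,2} + S_W$ together with the independence $Z_j\perp S_W$, a property specific to the Wald choice ${\bf K}={\bf\Sigma}^{-1}$ and not shared by general quadratic aggregators $\hat{\vec\beta}'{\bf K}\hat{\vec\beta}$.
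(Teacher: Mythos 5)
Your decomposition $S = Z_j^2 + S_W$ with $Z_j = \hat\beta_j/SE_j$, $S_W = \vec W_j'{\bf\Sigma}^{-1}\vec W_j$ and $Z_j\perp S_W$ is correct for ${\bf K}={\bf\Sigma}^{-1}$, and the chain you build on it is sound: $P'_{j,GN}=g(|Z_j|)$ for a continuous decreasing $g$, the marginal law of $Z_j$ under $\beta_j=0$ is $N(0,1)$ whatever $\vec\beta_{-j}$ is, the calibration identity $g^{-1}(x)=T^{-1}(1-x)$ is right, and the Chebyshev/total-covariance step correctly delivers $T(z)\le T_0(z)$. This is a genuinely different route from the paper's, which conditions on $\vec W_j$, invokes the truncated-normal representation of Lemma \ref{lem1} (the truncation set widens as $\vec W'{\bf K}\vec W$ grows because $\vec W'{\bf K}\vec c=0$), and argues that $\vec W'{\bf K}\vec W$ is stochastically smallest at $\vec\beta=\vec 0$. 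A nice byproduct of your approach is that the association step is an elementary proof of the Wald case of Lemma \ref{lemma-gaussiancorrelationineq}, which the paper obtains from the Gaussian correlation inequality for general positive definite ${\bf K}$; the elementary argument is available here precisely because of the orthogonality you exploit, and as you note it does not extend beyond ${\bf K}={\bf\Sigma}^{-1}$.

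The one substantive discrepancy is which probability is being bounded. You bound the \emph{unconditional} probability $Pr_{\vec\beta}(P'_{j,GN}\le x)$, which is what the displayed inequality literally says; the paper's own proof, and the downstream use for conditional FWER/FDR control in \S~\ref{subsec-mtp}, bound $Pr_{\vec\beta}(P'_{j,GN}\le x\mid S>S_{1-t_1})$. These do not coincide when $\vec\beta_{-j}\neq\vec 0$: conditioning on selection tilts the law of $|Z_j|$ by the factor $z\mapsto Pr_{\vec\beta}(S_W>S_{1-t_1}-z^2)$, and your argument says nothing about this tilted law. The gap is bridgeable with one additional ordering step: under $(0,\vec\beta_{-j})$ one has $S_W\sim\chi^2_{m-1}(\lambda)$ with $\lambda=\vec\beta'{\bf\Sigma}^{-1}\vec\beta$, and the monotone-likelihood-ratio property of the noncentral chi-square family implies that $Pr(\chi^2_{m-1}>t)/Pr(\chi^2_{m-1}(\lambda)>t)$ is non-increasing in $t$, so the selection tilt under $\vec 0$ dominates the tilt under $\lambda$ in likelihood ratio; hence $|Z_j|$ given $S>S_{1-t_1}$ is stochastically largest at $\vec\beta=\vec 0$, which combined with your calibration identity gives the conditional version. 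Without that step your proof establishes the theorem as displayed but not the selection-conditional claim that the paper actually proves and needs.
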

See Appendix \ref{app-thm-WALD-GN-proof} for the proof.

More generally, when selection is based on $S =\hat {\vec \beta}'{\bf K}\hat {\vec \beta}> S_{1 - t_1}$, where ${\bf K}$ is any positive definite symmetric matrix, we can still justify the use of $p'_{j, hybrid}$ for testing $H_j: \beta_j = 0$ for a large enough sample size. This follows since the conditional $p$-values under the global null are necessarily larger than the original $p$-values, as formally stated in the following lemma. 
\begin{lemma}\label{lemma-gaussiancorrelationineq}
If $\bf K$ is a positive definite matrix, $S =\hat {\vec \beta}'{\bf K}\hat {\vec \beta}$, and $\hat {\vec \beta}\sim N(\vec \beta, {\bf \Sigma})$, then 
\begin{equation}\label{eq-lemma-gaussiancorrelationineq}
Pr_{\vec \beta = \vec 0}(\hat\beta_j^{2} > b | S > s) 
\geq  Pr_{ \beta_j =  0}(\hat\beta_j^{2} > b)
\end{equation}
for arbitrary fixed $b, s > 0$.

\end{lemma}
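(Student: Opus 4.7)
The plan is to reduce this statement to the Gaussian correlation inequality (GCI), now a theorem of Royen (2014): for any centered Gaussian vector $X$ in $\mathbb{R}^{m}$ and any two convex sets $A,B\subseteq\mathbb{R}^{m}$ that are symmetric about the origin, $Pr(X\in A\cap B)\ge Pr(X\in A)\,Pr(X\in B)$. Under $\vec\beta=\vec 0$, $\hat{\vec\beta}$ is centered Gaussian, which is the regime in which GCI applies directly.

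The first step is to recognize the two relevant events as symmetric convex sets. The slab $A=\{x\in\mathbb{R}^{m}:\,x_j^{2}\le b\}$ is convex and centrally symmetric. Because ${\bf K}$ is positive definite, the set $B=\{x\in\mathbb{R}^{m}:\,x'{\bf K}x\le s\}$ is a bounded centered ellipsoid, hence also convex and centrally symmetric. Applying GCI to $A$ and $B$ under $\vec\beta=\vec 0$ yields
$$Pr_{\vec\beta=\vec 0}(\hat\beta_j^{2}\le b,\, S\le s)\ \ge\ Pr_{\vec\beta=\vec 0}(\hat\beta_j^{2}\le b)\,Pr_{\vec\beta=\vec 0}(S\le s).$$

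The second step is routine rearrangement. Using $Pr(A^{c}\cap B^{c})=1-Pr(A)-Pr(B)+Pr(A\cap B)$ and substituting the GCI bound shows that $Pr_{\vec\beta=\vec 0}(\hat\beta_j^{2}>b,\,S>s)\ge Pr_{\vec\beta=\vec 0}(\hat\beta_j^{2}>b)\,Pr_{\vec\beta=\vec 0}(S>s)$, and dividing by the second factor gives the conditional form. Finally, the right-hand side of \eqref{eq-lemma-gaussiancorrelationineq} is $Pr_{\beta_j=0}(\hat\beta_j^{2}>b)$; since the marginal law of $\hat\beta_j$ is $N(\beta_j,\vec e_j'{\bf\Sigma}\vec e_j)$ regardless of the other coordinates of $\vec\beta$, this marginal probability coincides with $Pr_{\vec\beta=\vec 0}(\hat\beta_j^{2}>b)$, completing the reduction.

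The main conceptual obstacle is availability of GCI itself: prior to Royen's proof this was a longstanding open problem, so it is worth flagging the reliance on that result. A self-contained fallback could be attempted by simultaneously diagonalizing ${\bf\Sigma}$ and ${\bf K}$, reducing to independent coordinates, and appealing to Khatri--\v{S}id\'ak-style slab-versus-ellipsoid inequalities; but invoking the general theorem is the cleanest route and imposes no additional assumptions on ${\bf\Sigma}$ or ${\bf K}$ beyond what is already in the hypothesis.
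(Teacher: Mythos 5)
Your proposal is correct and follows essentially the same route as the paper's own proof: both identify the slab $\{x:\,x_j^{2}\le b\}$ and the ellipsoid $\{x:\,x'{\bf K}x\le s\}$ as centrally symmetric convex sets, apply the Gaussian correlation inequality of Royen under $\vec\beta=\vec 0$, and pass to complements to obtain the conditional form. Your closing observation that $Pr_{\beta_j=0}(\hat\beta_j^{2}>b)=Pr_{\vec\beta=\vec 0}(\hat\beta_j^{2}>b)$ because the marginal law of $\hat\beta_j$ depends only on $\beta_j$ is a small but worthwhile clarification that the paper leaves implicit.
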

See Appendix \ref{app-lemma-gaussiancorrelationineq} for the proof. 
Setting $b$ to be the realized test statistic and $s = S_{1-t_1}$, $p'_{j,GN}=Pr_{\vec \beta  = \vec 0}\left(P_j\leq p_j \mid S>S_{1-t_1}\right)$ is the lefthand side of \eqref{eq-lemma-gaussiancorrelationineq} and $p_j = Pr(\chi^2_1\geq \hat \beta_j^2/SE_j^2)$ is the righthand side. It thus follows that 
$$p'_{j,GN}\geq p_j.$$ 
Since $\lim_{n\rightarrow\infty} P_{(\beta_1,\ldots,\beta_{j-1},0,\beta_{j+1}, \ldots, \beta_m)}(S > S_{1 - t_1})=1$ regardless of the true value of $\beta_j$ if $\beta_k\neq 0$ for at least one $k\neq j$, the probability of getting a smaller value than $p_j$ given selection, if $H_j$ is true, coincides with $p_j$ asymptotically. So $p_j$ is an asymptotically valid $p$-value if $\beta_k\neq 0$ for at least one $k\neq j$.  Since $p'_{j,GN}\geq p_j$,  it follows that $p'_{j,GN}$ and $p'_{j, hybrid}$ are asymptotically valid $p$-values for any $\vec \beta$.  

%SHOW THAT IF THE GLOBAL NULL IS TRUE, THE HYBRID PV IS CLOSE TO UNIFORM. 
\subsection{Controlling the conditional error  rate}\label{subsec-mtp}
In order to identify the non-null entries in $\vec \beta$, we can apply a valid multiple testing procedure on the conditional $p$-values computed as in \S~\ref{subsec-poly} or \S~\ref{subsec-hybrid}. We can then achieve conditional error control. 

The Bonferroni-Holm procedure will control the conditional FWER,  since the conditional $p$-values are valid $p$-values and the procedure is valid under any dependency structure among the test statistics. 

For conditional FDR control, we recommend using the Benjamini-Hochberg (BH) procedure. Although the BH procedure does not have proven FDR control for general dependence among the $p$-values, it usually controls the FDR for dependencies encountered in practice.  We believe that the robustness property of the BH procedure carries over to our setting, and that the conditional FDR will be controlled in practice. The robustness guarantee follows from empirical and theoretical results \citep{Reiner07}, which suggest that  the FDR of the BH procedure does not exceed its nominal level for test statistics with a joint normal distribution, and our simulations in \S~\ref{sec-sim}, which suggest that this holds also following selection.  

A conservative procedure that will control the conditional FDR is the Benjamini-Yekutieli procedure for general dependence, introduced in \cite{Benjamini01}. The theoretical guarantee follows since the conditional $p$-values are valid $p$-values and the procedure is valid under any dependency structure among the test statistics.

If the individual test statistics are independent, as occurs when the design matrix $\bf X$ is orthogonal in the linear model, and the aggregate test statistic is monotone increasing in the absolute value of each test statistic (keeping all others fixed), then we have a theoretical guarantee that the BH procedure on $p'_1,\ldots,p'_m$ controls the conditional FDR, even though these conditional $p$-values are dependent. This is a direct result of Theorem 3.1 in \cite{Heller16}, and it is formally stated in the following theorem. 
\begin{theorem}\label{thm-WALD-BH}
If $S = \hat {\vec \beta}' {\bf \Sigma}^{(-1)}\hat {\vec \beta}$, $\hat{\vec \beta}\sim N(\vec \beta,{\bf \Sigma})$, and $\bf \Sigma$ is a diagonal matrix, then the BH procedure at level $\alpha$ on $p_1',\ldots,p_m'$ controls the conditional FDR at level $m_0/m\times \alpha$, where $m_0$ is the number null coefficients in $\vec \beta$. 
\end{theorem}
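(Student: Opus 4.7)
The plan is to recognize this claim as a direct specialization of Theorem 3.1 of \cite{Heller16} to our setting and to verify that its hypotheses hold. That theorem asserts that if the $m$ individual test statistics are mutually independent and the aggregate test statistic used for selection is monotone increasing in the absolute value of each of them (holding the others fixed), then BH applied to the single-coordinate post-selection $p$-values controls the conditional FDR at $m_0/m\times\alpha$.

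First I would check the two structural hypotheses. Independence of $\hat\beta_1,\ldots,\hat\beta_m$ follows immediately from joint normality of $\hat{\vec\beta}$ together with ${\bf\Sigma}$ being diagonal. For the monotonicity hypothesis, observe that when ${\bf\Sigma}=\diag(\sigma_1^2,\ldots,\sigma_m^2)$ Wald's statistic simplifies to $S=\sum_{j=1}^{m}\hat\beta_j^2/\sigma_j^2$, which is strictly increasing in $|\hat\beta_j|$ whenever the remaining coordinates are held fixed.

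Next I would check that the polyhedral-lemma $p$-values $p'_j$ defined in \eqref{eq-condpv-poly} coincide with the single-coordinate conditional $p$-values that appear in the statement of Heller's Theorem 3.1. For $\vec\eta=\vec e_j$ and diagonal ${\bf\Sigma}$, the constant $c=(\vec e_j'{\bf\Sigma}\vec e_j)^{-1}{\bf\Sigma}\vec e_j$ equals $\vec e_j$, so $\vec W_j=(\hat\beta_1,\ldots,\hat\beta_{j-1},0,\hat\beta_{j+1},\ldots,\hat\beta_m)$ and the truncation set $\mathcal{A}(\vec W_j)=\{b:b^2>\sigma_j^2(S_{1-t_1}-\sum_{k\neq j}\hat\beta_k^2/\sigma_k^2)\}$ is a measurable function of $\vec W_j$ alone. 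By independence, $\hat\beta_j$ is independent of $\vec W_j$, so $p'_j$ reduces to the one-sided tail probability of the marginal $N(0,\sigma_j^2)$ law of $\hat\beta_j$ truncated to $\mathcal{A}(\vec W_j)$ and evaluated at $\hat\beta_j$; this is precisely the conditional $p$-value construction used by \cite{Heller16}.

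With both hypotheses verified and the $p$-values identified with those in the cited result, a direct invocation of Theorem 3.1 of \cite{Heller16} yields the conditional FDR bound $m_0/m\times\alpha$. The main technical obstacle is this last identification step: one has to spell out that under diagonal ${\bf\Sigma}$ the truncation region factorizes through $\vec W_j$ and inherits independence from the joint Gaussian law, so that the univariate truncated-normal tail probability from Theorem \ref{thm1} really is the single-study conditional $p$-value that drives Heller's proof. Once that is in hand, the conclusion is immediate.
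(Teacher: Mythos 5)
Your proposal is correct and takes essentially the same route as the paper, which proves this theorem simply by invoking Theorem 3.1 of \cite{Heller16} after noting that a diagonal ${\bf\Sigma}$ gives independence and that the Wald statistic is monotone increasing in each $|\hat\beta_j|$. Your additional verification that the polyhedral $p$-values $p'_j$ reduce, under diagonal ${\bf\Sigma}$, to the single-coordinate conditional $p$-values of \cite{Heller16} (via $c=\vec e_j$ and the factorization of the truncation set through $\vec W_j$) is a useful explicit step that the paper leaves implicit, but it does not change the argument.
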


\section{Estimation following selection}\label{sec-estimation}
So far we focused on  valid testing after selection by an aggregate test. But it is often also desirable to asses the absolute magnitude of parameters of interest. Just as model selection causes an inflation of test statistics, it also has an adverse effect on the accuracy of point estimates. In fact, inflation of estimated effect sizes is the main cause for the increased type-I error rates that are encountered in naive inference following selection. In Section \S~\ref{sub-computation} we discuss the computation of post-selection of maximum likelihood estimators which are defined as the maximizers of the likelihood of the data conditional on selection and serve to correct for some of the selection bias. %In Section \S~\ref{sub-mletheroy} we a give a consistency statement for the conditional maximum likelihood estimates. 

Beyond point estimates, valid post-selection confidence intervals can be constructed by inverting the post-selection tests described in Section \S~\ref{sec-testing}. These however, may  be either underpowered in the case of confidence intervals based on the polyhedral lemma or too conservative in the case of the hybrid confidence intervals. Thus, in Section \S~\ref{sub-CI} we propose novel regime switching confidence intervals that maintain the validity and power of the hybrid method intervals while ensuring the desired level of confidence asymptotically.

\subsection{Conditional maximum likelihood estimation}\label{sub-computation}
Let $\ell(\vec \beta)$ be  the log-likelihood for $\vec \beta$, and $\ell(\vec \beta| S > S_{1-t_1})$ the corresponding conditional log-likelihood. 
Define the conditional MLE as the maximizer of the conditional likelihood:
\begin{equation}\label{conditionalMLE}
\tilde{\vec\beta} = \arg\max_{\vec \beta} \ell(\vec \beta) -  \log Pr_{\vec \beta}( S > S_{1-t_1}).
\end{equation}
For notational convenience, we suppress the dependence of $\tilde{\vec\beta}$ on the selection threshold $t_1$. While difficult to compute in many practical cases, computing the conditional MLE following selection by aggregate testing is a relatively simple task. For the special case where $\bf K = \bf\Sigma^{(-1)}$, we are able to show that the maximum likelihood estimator is given by the solution to a simple line search problem.
\begin{theorem}\label{thm:linesearch}
Under the conditions of Theorem \ref{thm-WALD-GN}, the conditional maximum likelihood estimator is given by:
\begin{align*}
\tilde{\vec\beta} &= \arg\max_{\vec\beta} \ell(\vec\beta) - \log Pr_{\vec\beta}(S > S_{1-t_1})  \\
&= \arg\max_{\lambda\in[0, 1]} \ell(\lambda\hat{\vec\beta}) - \log Pr_{\lambda\hat{\vec\beta}}(S > S_{1-t_1})
\end{align*}
where $\hat{\vec\beta}$ is the observed value. 
\end{theorem}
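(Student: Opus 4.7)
The plan is to prove two things in sequence: first, that every critical point of the conditional log-likelihood lies on the ray $\{\lambda\hat{\vec\beta}:\lambda\in\Re\}$, and second, that the maximizer on this ray lies in $[0,1]$. Both steps rest on a structural symmetry of the Wald set-up. Because $\hat{\vec\beta}\sim N(\vec\beta,\mathbf{\Sigma})$, the quadratic form $S=\hat{\vec\beta}'\mathbf{\Sigma}^{-1}\hat{\vec\beta}$ has a non-central $\chi^2_m$ distribution with non-centrality parameter $\tau(\vec\beta):=\vec\beta'\mathbf{\Sigma}^{-1}\vec\beta$, so the selection probability depends on $\vec\beta$ only through this scalar: $P_{\vec\beta}(S>S_{1-t_1})=g(\tau(\vec\beta))$ for a smooth, strictly increasing function $g$ on $[0,\infty)$. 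Strict monotonicity of $g$ is the classical stochastic monotonicity of the non-central $\chi^2$ in its non-centrality parameter, and is the main external fact I would invoke.

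For the first step, I would write the log-likelihood as $\ell(\vec\beta)=-\tfrac12(\hat{\vec\beta}-\vec\beta)'\mathbf{\Sigma}^{-1}(\hat{\vec\beta}-\vec\beta)$ up to constants, differentiate the objective $\ell(\vec\beta)-\log g(\tau(\vec\beta))$, and set the gradient to zero. Using $\nabla\tau(\vec\beta)=2\mathbf{\Sigma}^{-1}\vec\beta$, the first-order condition becomes $\mathbf{\Sigma}^{-1}(\hat{\vec\beta}-\vec\beta)=[2g'(\tau)/g(\tau)]\,\mathbf{\Sigma}^{-1}\vec\beta$; pre-multiplying by $\mathbf{\Sigma}$ rearranges to $\hat{\vec\beta}=\bigl(1+2g'(\tau)/g(\tau)\bigr)\vec\beta$, which forces $\vec\beta$ to be a scalar multiple of $\hat{\vec\beta}$. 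A global maximizer exists because $\ell$ is coercive in $\vec\beta$ while $-\log g(\tau)$ remains bounded above (since $g\to 1$ as $\tau\to\infty$), so the conditional MLE must be one of these ray critical points.

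Substituting $\vec\beta=\lambda\hat{\vec\beta}$ reduces the objective, up to constants, to
\[ h(\lambda) := -\tfrac12(1-\lambda)^2 S-\log g(\lambda^2 S). \]
I would rule out $\lambda<0$ by computing $h(-\lambda)-h(\lambda)=\tfrac12 S\bigl[(1-\lambda)^2-(1+\lambda)^2\bigr]=-2\lambda S>0$ when $\lambda<0$, so sign-flipping strictly improves $h$. I would rule out $\lambda>1$ by comparing to $\lambda=1$: the quadratic piece $-\tfrac12(1-\lambda)^2 S$ is strictly less than its maximum $0$ attained at $\lambda=1$, and the $-\log g(\lambda^2 S)$ piece is also strictly less than at $\lambda=1$ because $g$ is strictly increasing and $\lambda^2 S>S$. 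Together these pin the maximizer in $[0,1]$ and yield the claimed line-search representation. The one delicate input is strict monotonicity of $g$; without strictness, both comparisons become weak and would not rule out degenerate tied maxima. This strictness is standard, via for example the Poisson mixture representation of the non-central $\chi^2$, so I do not expect a real obstacle beyond stating it carefully.
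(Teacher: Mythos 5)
Your proof is correct, but it follows a genuinely different route from the paper's. Both arguments hinge on the same structural fact -- that for $\mathbf{K}=\mathbf{\Sigma}^{-1}$ the statistic $S$ is non-central $\chi^2_m$ and the selection probability depends on $\vec\beta$ only through the non-centrality parameter $\tau(\vec\beta)=\vec\beta'\mathbf{\Sigma}^{-1}\vec\beta$ -- but they exploit it differently. The paper rewrites the optimization as a nested problem: an outer maximization over the selection-probability level $\gamma$ and an inner maximization of $\ell(\vec\beta)$ over the ellipsoid $\{\vec\beta'\mathbf{\Sigma}^{-1}\vec\beta=\delta\}$, solved in closed form as a Tikhonov-regularized problem whose solution is $(1+c)^{-1}\hat{\vec\beta}$ with $c\ge 0$. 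You instead work with first-order conditions for the unconstrained objective, showing $\hat{\vec\beta}=(1+2g'(\tau)/g(\tau))\vec\beta$ at any critical point, and then analyze the one-dimensional profile $h(\lambda)$ directly. What your route buys is an explicit and airtight treatment of why the maximizer lands in $[0,1]$: the sign-flip comparison kills $\lambda<0$ and the monotonicity of $g$ kills $\lambda>1$, whereas the paper's proof simply restricts attention to $\delta\le\hat{\vec\beta}'\mathbf{\Sigma}^{-1}\hat{\vec\beta}$ without spelling out why larger $\delta$ (equivalently $\lambda>1$) cannot win -- a gap your $\lambda>1$ argument fills. (In fact your FOC already yields $\lambda=1/(1+2g'/g)\in(0,1]$ directly, since $g'\ge 0$, so the ray analysis is partly redundant but harmless.) What the paper's route buys is that the inner constrained problem needs no monotonicity of $g$ at all, only that the selection probability is constant on the ellipsoids. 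Your concern about strict versus weak monotonicity is not really an obstacle: the $\lambda<0$ exclusion is strict purely from the quadratic term, and for $\lambda>1$ the quadratic term alone is already strictly worse than at $\lambda=1$, so weak monotonicity of $g$ suffices to place a global maximizer in $[0,1]$.
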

See appendix \ref{app-linesearch} for the proof. 

The Theorem shows that the maximum likelihood estimation is reduced to maximizing the likelihood only with respect to a scalar factor. This follows when $\bf K=\bf\Sigma^{(-1)}$  because the distribution of the test-statistic is governed by one unknown parameter. In the general case, the distribution of $S$ is a sum of chi-square random variables which depends on $\text{rank}(\bf K)$ parameters, making the optimization problem slightly more involved. So, for ${\bf K} \neq {\bf \Sigma^{-1}}$ we use the stochastic optimization approach of \citet{Meir17} to maximize the likelihood. Let 
$$
\vec z(\vec\beta) \sim f_{\vec\beta}(\hat{\vec\beta}|S > S_{1-t_1})
$$ 
be a sample from the post selection distribution of $\hat{\vec\beta}$ for a mean parameter value $\vec\beta$. Then, taking gradient steps of the form
\begin{equation}\label{eq-grad-steps}
\tilde{\vec\beta}^{t+1} = \tilde{\vec\beta}^t + \gamma_t{\bf\Sigma}^{(-1)}\left(\hat{\vec\beta} - \vec z(\tilde{\vec\beta}^{t})\right)
\end{equation}
will lead to convergence to the conditional MLE as long as
$$
\sum_{t=1}^{\infty} \gamma_t = \infty \qquad \text{and}
\qquad \sum_{t=1}^{\infty}\gamma_t^{2} < \infty.
$$
\begin{theorem}
Suppose that $\hat{\vec\beta}\sim N(\vec\beta, {\bf\Sigma})$ and that inference is conducted only if  $S >  S_{1-t_1}$ with $S = \hat{\vec\beta}'{\bf K} \hat{\vec\beta}$. Then, the algorithm defined by \eqref{eq-grad-steps} converges to the conditional MLE for the post-aggregate testing problem which satisfies
$$
\lim_{t\rightarrow\infty} \hat{\vec\beta} - E_{\tilde{\vec\beta}^t}(\hat{\vec\beta} | S > S_{1-t_1}) = 0. 
$$
\end{theorem}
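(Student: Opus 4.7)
The plan is to recognize the update \eqref{eq-grad-steps} as a Robbins--Monro stochastic approximation scheme whose deterministic counterpart is gradient ascent on the conditional log-likelihood $\ell(\vec\beta\mid S > S_{1-t_1})$. First, I would compute the gradient explicitly. Because the conditional distribution of $\hat{\vec\beta}$ given $\{S > S_{1-t_1}\}$ is a truncated exponential family with natural parameter ${\bf\Sigma}^{-1}\vec\beta$ and sufficient statistic $\hat{\vec\beta}$, the standard exponential-family identity for the derivative of the log-normalizer yields
$$
\nabla_{\vec\beta}\log P_{\vec\beta}(S > S_{1-t_1}) = {\bf\Sigma}^{-1}\bigl(E_{\vec\beta}[\hat{\vec\beta}\mid S > S_{1-t_1}] - \vec\beta\bigr),
$$
and combining with $\nabla_{\vec\beta}\ell(\vec\beta) = {\bf\Sigma}^{-1}(\hat{\vec\beta} - \vec\beta)$ gives
$$
\nabla_{\vec\beta}\ell(\vec\beta\mid S > S_{1-t_1}) = {\bf\Sigma}^{-1}\bigl(\hat{\vec\beta} - E_{\vec\beta}[\hat{\vec\beta}\mid S > S_{1-t_1}]\bigr).
$$
Thus \eqref{eq-grad-steps} has the form $\tilde{\vec\beta}^{t+1} = \tilde{\vec\beta}^t + \gamma_t\,g(\tilde{\vec\beta}^t) + \gamma_t\,\xi_t$, where $g$ is the true gradient above and $\xi_t = {\bf\Sigma}^{-1}\bigl(E_{\tilde{\vec\beta}^t}[\hat{\vec\beta}\mid S > S_{1-t_1}] - \vec z(\tilde{\vec\beta}^t)\bigr)$ is mean-zero noise conditional on $\tilde{\vec\beta}^t$.

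Second, I would establish existence and uniqueness of the limiting target. The function $-\ell(\vec\beta\mid S>S_{1-t_1})$ is strictly convex: $-\ell(\vec\beta)$ is a positive-definite quadratic, and $\log P_{\vec\beta}(S>S_{1-t_1})$ is convex in $\vec\beta$ as the composition of the convex log-partition function of the truncated exponential family with the linear reparametrization $\vec\beta\mapsto{\bf\Sigma}^{-1}\vec\beta$. It is also coercive, since $\ell(\vec\beta)\to -\infty$ quadratically in $\|\vec\beta\|$ while $\log P_{\vec\beta}(S > S_{1-t_1}) \to 0$ as $\|\vec\beta\|\to\infty$ (the selection event becomes asymptotically certain for means far from the origin). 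Consequently the conditional MLE $\tilde{\vec\beta}$ exists and is unique, and its first-order condition $g(\tilde{\vec\beta})=0$ is exactly $\hat{\vec\beta} = E_{\tilde{\vec\beta}}[\hat{\vec\beta}\mid S > S_{1-t_1}]$.

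Third, I would invoke a standard stochastic approximation convergence theorem, e.g., the Robbins--Siegmund almost-supermartingale lemma or the ODE approach of Kushner and Yin, along the same lines as in \citet{Meir17}. Continuity of $\vec\beta\mapsto E_{\vec\beta}[\hat{\vec\beta}\mid S>S_{1-t_1}]$ follows from dominated convergence applied to the exponential-family density; conditionally on $\tilde{\vec\beta}^t$, the noise $\xi_t$ has finite variance locally uniformly in $\vec\beta$ because truncating a Gaussian to the complement of an ellipsoid preserves sub-Gaussian tails. The step-size hypotheses $\sum_t\gamma_t=\infty$ and $\sum_t\gamma_t^2<\infty$ are given. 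Using the Lyapunov function $V(\vec\beta) = -\ell(\vec\beta\mid S>S_{1-t_1})$, a one-step Taylor expansion gives
$$
E\bigl[V(\tilde{\vec\beta}^{t+1}) - V(\tilde{\vec\beta}^t)\,\bigm|\,\mathcal F_t\bigr] \leq -\gamma_t\,\|g(\tilde{\vec\beta}^t)\|_{\bf\Sigma}^2 + C_t\,\gamma_t^2,
$$
so Robbins--Siegmund yields almost-sure convergence of $V(\tilde{\vec\beta}^t)$ together with $\sum_t \gamma_t \|g(\tilde{\vec\beta}^t)\|^2 < \infty$; coercivity and strict concavity then force $\tilde{\vec\beta}^t \to \tilde{\vec\beta}$ and hence the claimed limit.

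The step I expect to be the main obstacle is the a.s. boundedness of $\{\tilde{\vec\beta}^t\}$, which is needed to promote the local noise bound to a deterministic constant $C$ that makes the Robbins--Siegmund hypothesis applicable. I would handle this via a stopping-time argument: on the event $\{\|\tilde{\vec\beta}^t\|\leq R\}$ the noise variance is bounded by some $C(R)$, so the Lyapunov inequality above makes $V(\tilde{\vec\beta}^{t\wedge\tau_R})$ a nonnegative almost-supermartingale; coercivity of $V$ implies that a large enough $R$ makes the stopping time $\tau_R$ infinite with probability arbitrarily close to one, and a standard localization argument then removes the truncation and delivers the global convergence claim.
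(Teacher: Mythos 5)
Your proposal is correct and follows the same route as the paper: the paper's proof is a one-line appeal to the stochastic approximation theory of Meir and Drton (2017), which is exactly the Robbins--Monro machinery you set up, and your gradient identity $\nabla_{\vec\beta}\log P_{\vec\beta}(S>S_{1-t_1})={\bf\Sigma}^{-1}(E_{\vec\beta}[\hat{\vec\beta}\mid S>S_{1-t_1}]-\vec\beta)$ and the resulting first-order condition are precisely what makes \eqref{eq-grad-steps} an unbiased stochastic gradient scheme for the conditional log-likelihood. The one substantive difference is how the noise-variance hypothesis is verified. You treat the bound as merely \emph{locally} uniform and flag a.s.\ boundedness of the iterates as the main obstacle, proposing a stopping-time/localization argument. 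The paper's entire proof is the observation that dissolves this obstacle: since $\{x:x'{\bf K}x\le S_{1-t_1}\}$ is convex and symmetric, Anderson's lemma gives $P_{\vec\beta}(S>S_{1-t_1})\ge t_1$ for every $\vec\beta$, whence for any event of probability at least $t_1$ one has $\Var_{\vec\beta}(\hat{\vec\beta}\mid S>S_{1-t_1})\preceq {\bf\Sigma}/t_1$ \emph{globally}. With this uniform bound the noise condition of \citet{Meir17} (your inequality \eqref{eq-w-bound}-type condition) holds with a single constant, and no localization is needed; your own sub-Gaussian remark (the conditional density is pointwise at most $t_1^{-1}$ times the Gaussian density) already contains this, so you could shorten your argument considerably by promoting it from local to global. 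One small imprecision: $\log P_{\vec\beta}(S>S_{1-t_1})$ is not obviously convex in $\vec\beta$ on its own; the clean statement is that the negative conditional log-likelihood equals $-\hat{\vec\beta}'{\bf\Sigma}^{-1}\vec\beta+\tilde A({\bf\Sigma}^{-1}\vec\beta)$ up to constants, with $\tilde A$ the log-partition function of the truncated family, and is therefore convex (strictly, since the conditional covariance is positive definite); this is what your uniqueness and coercivity claims actually need, and coercivity uses $\log P_{\vec\beta}\in[\log t_1,0]$, again via the same Anderson bound.
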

\begin{proof} The result follows from the fact that the variance of the post-selection distribution of $\hat{\vec\beta}$ can be uniformly bounded from above by ${\bf\Sigma} / t_1$, see \citet{Meir17} for details. \end{proof}

We discuss the topic of conditional maximum likelihood estimation in Generalized Linear Models and the related problem of estimation after aggregate testing with a linear test in Appendices \ref{appendix-mle-computation} and \ref{sec-linear-aggregate-test}.

The conditional MLE is consistent assuming the following. 
Suppose that we observe a sequence of regression coefficient estimates $\hat{\vec\beta}_1,\dots,\hat{\vec\beta}_n,\dots$ such that
\begin{equation}\label{eq-beta-condition}
%\sqrt{n}(\hat{\vec\beta}_n - \vec\beta) \sim N(0, \bf\Sigma_n), \qquad \lim_{n\rightarrow\infty} \bf\Sigma_n =^P \bf\Sigma.
\hat{\vec\beta}_n  \sim N(\vec\beta, {\bf\Sigma}_n), \qquad n{\bf\Sigma}_n \textrm{ converges in probability}.
\end{equation}
Furthermore, suppose that we perform inference on the individual coordinates of $\hat{\vec\beta}_n$ if and only if 
\begin{equation}\label{eq-test-condition}
S_n > S_{1 - t_1}, \qquad S_n = \hat{\vec\beta}'_n {\bf K_n} \hat{\vec\beta}_n.
\end{equation}
The good behaviour of the conditional MLE hinges on the probability of passing the selection by the aggregate test. The lower bound on the this probability is given trivially by $t_1$ and therefore the conditional MLE is consistent. 
\begin{theorem}\label{thm-const-asymp}
Assume that \eqref{eq-beta-condition} and \eqref{eq-test-condition} hold. Then, the conditional MLE is consistent for $\vec\beta$, satisfying:
$$
\lim_{n\rightarrow\infty} Pr\left(\|\tilde{\vec\beta}_n - \vec\beta\|_{\infty}  > \varepsilon | S_n > S_{1-t_1}\right) = 0, \qquad \forall\varepsilon > 0. 
$$
\end{theorem}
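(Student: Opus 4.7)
The plan is to exploit the fact that in the Gaussian model the log-likelihood is exactly quadratic and the selection penalty $-\log P_{\vec\beta}(S_n > S_{1-t_1})$ is uniformly bounded, so the conditional MLE cannot stray far from the unpenalized MLE $\hat{\vec\beta}_n$, which is itself $\sqrt{n}$-consistent.

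The first step I would carry out is establishing the key uniform bound
\[
P_{\vec\beta}(S_n > S_{1-t_1}) \geq t_1 \qquad \text{for every } \vec\beta\in\mathcal R^m,
\]
which implies $0 \leq -\log P_{\vec\beta}(S_n > S_{1-t_1}) \leq -\log t_1$ uniformly in $\vec\beta$. This is the only nontrivial probabilistic input: it follows from stochastic dominance of a weighted sum of noncentral chi-squares over the corresponding central sum. Concretely, diagonalize $K_n^{1/2}\Sigma_n K_n^{1/2}= UDU'$ and set $W = U'K_n^{1/2}\hat{\vec\beta}_n$; then $S_n = \sum_i W_i^2$ with $W_i \sim N(\nu_i, d_i)$ and each $W_i^2$ is stochastically larger than its mean-zero counterpart, so the joint tail $P_{\vec\beta}(S_n > c) \geq P_{\vec 0}(S_n > c) = t_1$.

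Next, using the fact that for $\hat{\vec\beta}_n\sim N(\vec\beta,\Sigma_n)$ the log-likelihood is exactly quadratic,
\[
\ell_n(\hat{\vec\beta}_n) - \ell_n(\vec b) = \tfrac{1}{2}(\hat{\vec\beta}_n - \vec b)'\Sigma_n^{-1}(\hat{\vec\beta}_n - \vec b),
\]
I would apply the defining inequality of $\tilde{\vec\beta}_n$ at $\vec b = \tilde{\vec\beta}_n$, bound the difference of the two penalty terms by $-\log t_1$ via Step 1, and conclude
\[
\tfrac{1}{2}(\hat{\vec\beta}_n - \tilde{\vec\beta}_n)'\Sigma_n^{-1}(\hat{\vec\beta}_n - \tilde{\vec\beta}_n) \leq -\log t_1.
\]
Because $n\Sigma_n$ converges in probability to a positive definite limit, $\lambda_{\min}(\Sigma_n^{-1}) \asymp n$, so this yields $\|\hat{\vec\beta}_n - \tilde{\vec\beta}_n\|_\infty = O_p(n^{-1/2})$. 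Combined with $\|\hat{\vec\beta}_n - \vec\beta\|_\infty = O_p(n^{-1/2})$ from the Gaussian assumption, the triangle inequality gives $\|\tilde{\vec\beta}_n - \vec\beta\|_\infty \to 0$ in probability.

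Finally, to upgrade the unconditional convergence to conditional convergence, I would observe that for any event $A_n$ and the selection event $B_n = \{S_n > S_{1-t_1}\}$ the inequality $P(A_n|B_n) \leq P(A_n)/P(B_n) \leq P(A_n)/t_1$ holds by Step 1, so $P(A_n)\to 0$ transfers immediately. The main obstacle is really Step 1: once the uniform lower bound on the selection probability is in hand, everything else is a quadratic calculation. The stochastic-dominance argument for general positive semi-definite $K_n$ (not just Wald's choice $\Sigma_n^{-1}$) is the only place the structure of quadratic aggregate tests is used, and is the step I would write out most carefully.
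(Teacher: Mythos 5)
Your proof is correct, and it takes a more self-contained route than the paper does. The paper's own proof is a two-line appeal to the general theory of selective maximum likelihood in exponential families from \citet{Meir17}, combined with the single fact that $\inf_{\vec\beta} Pr_{\vec\beta}(S_n > S_{1-t_1}) = t_1$; it does not spell out why that infimum equals $t_1$, nor how consistency follows from it. You share the same pivotal ingredient --- the uniform lower bound $Pr_{\vec\beta}(S_n>S_{1-t_1})\ge t_1$ --- but you actually prove it, via diagonalization of $K_n^{1/2}\Sigma_n K_n^{1/2}$ and stochastic monotonicity of each $W_i^2$ in its noncentrality, which works for any positive semi-definite $K_n$, and you then replace the cited general theory with a direct argument: since the Gaussian log-likelihood is exactly quadratic and the penalty $-\log P_{\vec\beta}(S_n>S_{1-t_1})$ lies in $[0,-\log t_1]$ uniformly, the optimality of $\tilde{\vec\beta}_n$ forces $\tfrac12(\hat{\vec\beta}_n-\tilde{\vec\beta}_n)'\Sigma_n^{-1}(\hat{\vec\beta}_n-\tilde{\vec\beta}_n)\le -\log t_1$, whence $\|\tilde{\vec\beta}_n-\hat{\vec\beta}_n\|=O_p(n^{-1/2})$ because $\lambda_{\min}(\Sigma_n^{-1})\gtrsim n$ under \eqref{eq-beta-condition}; the transfer from unconditional to conditional convergence via $P(A_n\mid B_n)\le P(A_n)/t_1$ again uses the same uniform bound. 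What your approach buys is an explicit $O_p(n^{-1/2})$ rate for the distance between the conditional and unconditional MLEs and independence from the external machinery of \citet{Meir17}; what it costs is reliance on the exactly quadratic Gaussian likelihood, so it does not extend verbatim to the GLM setting of Example \ref{rem-glm}, which the cited general theory is designed to cover. Two cosmetic remarks: positive definiteness of the limit of $n\Sigma_n$ is not actually needed (convergence alone bounds $\lambda_{\max}(n\Sigma_n)$, which is all your eigenvalue estimate uses), and for singular $K_n$ the degenerate coordinates of $W$ contribute a deterministic nonnegative amount, so the stochastic dominance step still goes through.
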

\proof The result follows from the theory developed in the work of \citet{Meir17} for selective inference in exponential families and the fact that
$$
\inf_{\vec\beta} Pr_{\vec\beta}(S_n > S_{1-t_1}) = t_1, \qquad \forall n.
$$
\qed

For a discussion of post-selection efficiency, see the work of \citet{Routenberg15}.

\subsection{Confidence intervals following selection by an aggregate test}\label{sub-CI}
From Theorem \ref{thm1} it is clear that the truncated normal distribution can be used to construct confidence intervals post-selection in a straightforward manner. However, the extra conditioning (on $\vec W$) may lead to wide confidence intervals relative to confidence intervals based on the sampled distributions, as pointed out by \cite{Tian16}. As an alternative, it is possible to  invert a global type test (specifically, the test with null hypothesis $\vec \beta = \vec e_j b$ for coefficient $\beta_j$) and construct a hybrid type confidence interval in order to obtain a confidence interval with more power to determine the sign of the regression coefficients \citep{Weinstein13}.

For constructing a confidence interval at a $1 - \alpha$ level, let $L'_j(\alpha)$ and $U'_j(\alpha)$ be the lower and upper bounds of the polyhedral confidence interval for the $j$th variable, so:
$$
F^{\mathcal{A}}_{L',{\vec e_j}'{\bf \Sigma} {\vec e_j}}(\hat\beta_j) = 1 - \frac{\alpha}{2}, \qquad 
F^{\mathcal{A}}_{U',{\vec e_j}'{\bf \Sigma} {\vec e_j}}(\hat\beta_j) = \frac{\alpha}{2},
$$
where $F^{\mathcal{A}}_{b,\sigma^2}$ is as defined in \S~\ref{subsec-poly}. Similarly, let $L_{GN,j}'(\alpha)$ and $U_{GN,j}'(\alpha)$ be the lower and upper limit of the global-null confidence interval for the $j$th variable:
$$
\left\{b: \; \alpha/2 \leq F_{\vec\beta = \vec e_j b}(\hat\beta_j| S > S_{1-t_1}) \leq 1 - \alpha/2\right\},
$$
where $\vec e_j$ is the unit vector and $F_{\vec\beta = \vec e_j b}(\hat\beta_j| S > S_{1-t_1})$ is the CDF of ${\vec e}_j'\hat {\vec \beta}$ given selection, for the parameter vector $\vec\beta = \vec e_j b$. We use the Robbins-Monroe process to find $L_{GN,j}'$ and $U_{GN,j}'$ \citep{Garthwaite92}. As in testing, the polyhedral confidence interval tends to be shorter and more efficient if there are several variables in the model that are highly correlated with the response variable and the global-null confidence intervals tend to be more powerful when the model is sparse or if the global-null hypothesis holds (approximately). As we have done in Section \S~\ref{subsec-hybrid}, we propose a hybrid method for constructing a confidence interval, as defined by the lower and upper bounds:
$$
L_{hybrid,j}(\alpha) = \max\{L'_j(\alpha/2), L'_{GN,j}(\alpha/2)\}, \qquad
U_{hybrid,j}(\alpha) = \min\{U'_j(\alpha/2), U'_{GN_j}(\alpha / 2)\}.
$$

The hybrid confidence intervals, while possessing a good degree of power to determine the sign regardless of the true underlying model, tend to be inefficient when there is strong signal in the data. To see why, consider the case of a regression model where $\beta_1,\beta_2 > 0$. Then, for a sufficiently large sample size the polyhedral confidence interval will apply no correction and hybrid confidence interval will be conservative, with an asymptotic level of $1-\alpha/2$: $\lim_{n\rightarrow \infty} P\{(L',U')_{j,hybrid}(\alpha) = (L',U')_{j}(\alpha / 2)\} = 1$. As a remedy, we propose a regime switching scheme for constructing confidence intervals in which we first determine whether $\|\vec\beta\|\approx 0$ or $\|\vec\beta\| \gg 0$ and then construct confidence intervals accordingly.

\begin{procedure}\label{proc-postselectionCI}
The post-selection level $1-\alpha$ confidence interval for $\beta_j$, with  switching regime at level $t_2<\alpha\times t_1$ (with default value $t_2 = \alpha^{2}\times t_1$):
\begin{enumerate}
\item Compute $S_{1-t_2}> S_{1-t_1}$. 
\item If $S<S_{1-t_2}$, i.e., the aggregate test does not pass the more stringent threshold $t_2$, then compute  the hybrid  conditional confidence interval at level $1-\alpha^* =1-(\alpha-t_2/t_1)$.
\item If $S\geq S_{1-t_2}$, compute  the  unconditional confidence interval, at level $1-\alpha^*=1- \alpha$. 
  \end{enumerate}
\end{procedure}

\begin{theorem}\label{thm-ci}
Post-selection confidence intervals constructed with Procedure \ref{proc-postselectionCI} have a confidence level at least $1-\alpha$ if $\vec \beta = \vec 0$, and an asymptotic level $1-\alpha$ if $\vec \beta \neq 0$. 
\end{theorem}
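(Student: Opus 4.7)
The plan is to split the selection event into the two regimes that Procedure \ref{proc-postselectionCI} treats differently and bound the non-coverage probability in each. Writing $CI$ for the interval returned by the procedure, and $CI_{\mathrm{hyb}}$, $CI_{\mathrm{unc}}$ for the hybrid and unconditional intervals that would be emitted in each branch, I would start from the decomposition
\begin{align*}
P_{\vec\beta}(\beta_j \notin CI \mid S > S_{1-t_1})
&= P_{\vec\beta}\bigl(\beta_j \notin CI_{\mathrm{hyb}},\, S_{1-t_1} < S < S_{1-t_2} \,\big|\, S > S_{1-t_1}\bigr) \\
&\quad + P_{\vec\beta}\bigl(\beta_j \notin CI_{\mathrm{unc}},\, S \geq S_{1-t_2} \,\big|\, S > S_{1-t_1}\bigr),
\end{align*}
so that the argument reduces to controlling the two pieces in each of the two scenarios.

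For the global-null case $\vec\beta = \vec 0$, I would bound the first summand by the conditional non-coverage of the hybrid CI built at level $1-\alpha^\ast$ with $\alpha^\ast = \alpha - t_2/t_1$; by Corollary \ref{cor1} together with the hybrid construction of \S~\ref{subsec-hybrid} this gives at most $\alpha - t_2/t_1$. For the second summand I would note that, under $\vec\beta = \vec 0$, the unconditional non-coverage of $CI_{\mathrm{unc}}$ is exactly $\alpha$ while $P_{\vec 0}(S \geq S_{1-t_2}) = t_2$, so
$$P_{\vec 0}(\beta_j \notin CI_{\mathrm{unc}},\, S \geq S_{1-t_2}) \;\leq\; \min\{\alpha, t_2\} \;=\; t_2$$
because $t_2 < \alpha t_1 < \alpha$. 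Dividing by $P_{\vec 0}(S > S_{1-t_1}) = t_1$ yields $t_2/t_1$ on the conditional scale, and the two pieces add to exactly $\alpha$.

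For the non-null case $\vec\beta \neq \vec 0$ I would invoke the asymptotic setting \eqref{eq-beta-condition}--\eqref{eq-test-condition}: since $S_n$ grows at rate $n$ under the alternative while $S_{1-t_2}$ is a fixed quantile of the null distribution of $S$, both $P_{\vec\beta}(S_n > S_{1-t_2})$ and $P_{\vec\beta}(S_n > S_{1-t_1})$ converge to one. Hence the first summand is dominated by $P_{\vec\beta}(S_n < S_{1-t_2} \mid S_n > S_{1-t_1})$ and vanishes, while the second summand agrees asymptotically with the unconditional non-coverage of $CI_{\mathrm{unc}}$, which is exactly $\alpha$.

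The conceptual point, and the reason the constraint $t_2 < \alpha t_1$ forces $\alpha^\ast = \alpha - t_2/t_1 > 0$, is that in the global-null scenario the high-signal branch can leak up to $t_2/t_1$ units of non-coverage into the conditional probability and the hybrid branch must compensate with a slightly tighter nominal level. The main bookkeeping obstacle I expect is to apply the hybrid CI guarantee with respect to the full selection event $\{S > S_{1-t_1}\}$ rather than the refined event $\{S_{1-t_1} < S < S_{1-t_2}\}$; this is why I bound the first summand by the conditional non-coverage of $CI_{\mathrm{hyb}}$ given only $\{S > S_{1-t_1}\}$, and do not try to reanalyze the hybrid CI under a strictly finer conditioning.
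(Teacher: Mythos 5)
Your proof is correct and follows essentially the same route as the paper's: the same decomposition of the conditional non-coverage probability over the two regimes, the same bound of $t_2/t_1$ on the high-signal branch under $\vec\beta=\vec 0$ (the paper simply drops the non-coverage event and uses $Pr_{\vec 0}(S>S_{1-t_2}\mid S>S_{1-t_1})=t_2/t_1$, which is equivalent to your $\min\{\alpha,t_2\}/t_1$ bookkeeping), the same appeal to the hybrid interval's conditional level $1-(\alpha-t_2/t_1)$ on the low-signal branch, and the same asymptotic argument that $Pr_{\vec\beta}(S>S_{1-t_2})\to 1$ under the alternative so that the procedure reduces to the exact unconditional interval.
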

See Appendix \ref{app-pf-thm-ci} for the proof. 

\begin{remark}\emph{
Ours is not the first regime switching procedures proposed for inference in the presence of data-driven variable selection, see for example the works of \citet{Chatterjee2011} and \citet{McKeague2015}. In both these cases, one has to determine whether some (or all) of the parameters are zero and construct a test in an appropriate manner. The usual prescription for selecting tuning parameters in such procedures is to scale the tuning parameter of the test (in our case, $t_2$) in such a way so the correct regime is selected with probability approaching one as the sample size grows. In our case, this would amount to setting $t_{2,n}$ in such a way so that $t_{2,n}\rightarrow 0$ and $S_{1 - t_{2,n}} = o(n)$. However, in practice it is necessary to select a single a value for $t_2$ and so we chose to fix $t_2$ to a small value as to maintain a good degree of power when there is only limited amount of signal in the data and to modify our procedure in such a way as to ensure some finite sample coverage guarantees. 
}\end{remark}

\begin{figure}[t]
\label{fig-est-ci}
\begin{center}
\includegraphics[width= 14 cm]{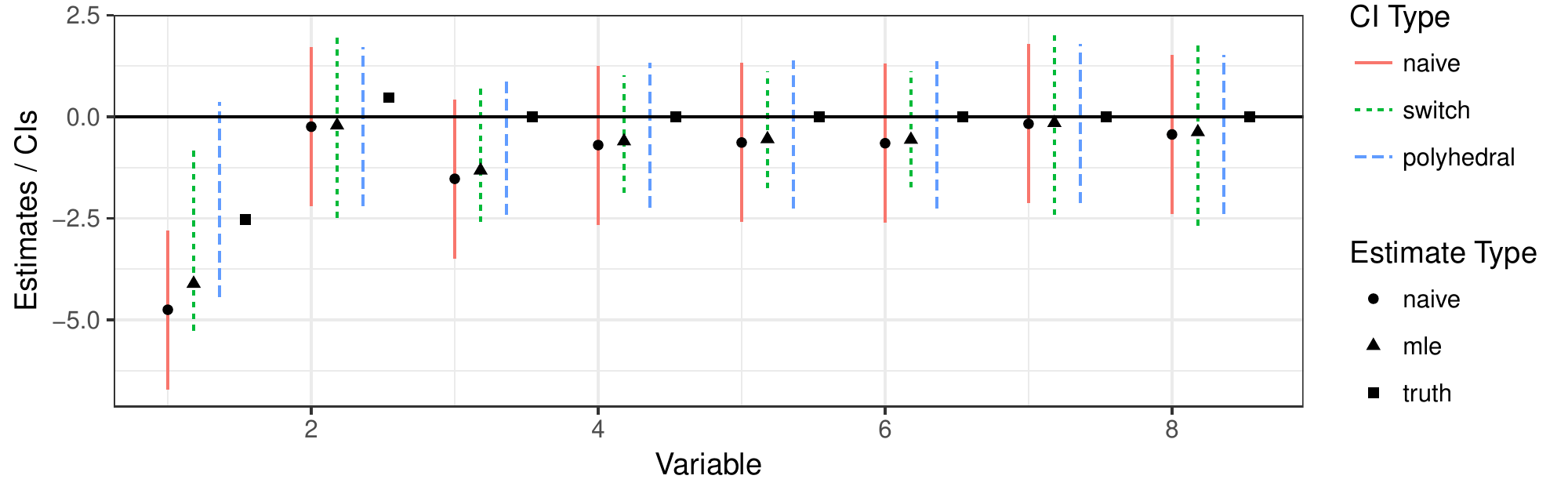}
\caption{Point estimates and confidence intervals for the artificial data example described in Example \ref{ci-example}. The naive point estimate is marked as a circle, the conditional MLE as a triangle and the true value of the parameter is marked as a square. The confidence intervals are the naive (solid red line), hybrid (dotted green line) and polyhedral (dashed blue line).}
\end{center}
\end{figure}

\begin{example}\label{ci-example}\emph{
Figure \ref{fig-est-ci} shows point estimates and confidence intervals for normal means vector which was selected via a quadratic aggregate test. The figure was generated by sampling $\hat{\vec \beta} \sim N_8(\vec\beta,{\bf \Sigma})$ with $\Sigma_{i, j} = 0.3 I_{i\neq j} + 1 I_{i = j}$, $\beta_1 = -2.5$, $\beta_2 = 0.5$ and $\beta_3 = \dots =\beta_8 = 0$. The aggregate test applied was a Wald test at an $\alpha = 0.001$ level. The naive and conditional estimates are plotted along with naive, polyhedral and hybrid $95\%$ confidence intervals. The conditional MLE applies the same multiplicative shrinkage of $0.86$ to all of the coordinates of $\hat{\vec \beta}$ and so the shrinkage is more visible for the larger observed values. Because the selection is driven by $\hat\beta_1$ corresponding to the large negative coordinate $\beta_1$, the polyhedral confidence intervals for the other coordinates of $\beta$ are similar in size to the naive confidence intervals. The naive confidence intervals overestimates the magnitude of $\beta_1$, the polyhedral confidence intervals cover the true parameter value but fails to determine its sign and the regime switching confidence intervals both cover the true parameter value and succeed in determining the sign. 
}\end{example}

\section{Simulations}\label{sec-sim}
In this section we conduct a simulation study where we assess the methods proposed in this work and verify our theoretical findings. In Section \S~\ref{sub-sim-testing} we asses the post-selection tests proposed in Section \ref{sec-testing} with respect to their ability to control the FDR. In Section \S~\ref{sub-power} we compare the different testing method with respect to their power to detect true signal in the data. In Section \S~\ref{sub-estimation} we compare the conditional MLE and the unadjusted MLE with respect to their estimation error. Finally, in Section \S~\ref{sub-cover} we asses the coverage rates of the polyhedral and regime switching confidence intervals. 

In all of our simulations we generate data in a similar manner. We first generate a design matrix in a manner meant to approximate a rare-variant design. We sample marginal expression proportions for our variants from $g_1,...,g_m \sim Gamma(1, 300)$ constrained to $[2\times 10^{-4},0.1]$ and for each subject we sample two multivariate normal vectors $r_{i,.} \sim N(0, U)$ with $U_{ij} = 0.8^{|i-j|}$. We then set $X_{i,} = \sum_{k=1}^{2} I\{\Phi(r_{i,j}) \leq g_j\}$ to obtain a design matrix with dependent columns and a marginal distribution $X_{ij}\sim Bin(2, g_{j})$. We generate a sparse regression coefficients vector with $m - s$ zero coordinates and $s$ coordinates which are sampled form the $ \text{Laplace}(1)$ distribution. We normalize the values of the regression coefficients such that the signal to noise ratio
\begin{equation}\label{eq-snr-def}
\text{snr} = \sqrt{\vec\beta'(X'X)\vec\beta}
\end{equation}  
equals some pre-specified value. Finally, we generate a response variable $y = X\beta + \varepsilon$ with $\varepsilon\sim N(0,I)$. In all of our simulations we use a Wald aggregate test with a significance level of $t_1 = 0.001$.
 
\begin{figure}[t]
\begin{center}
\includegraphics[width= 14 cm]{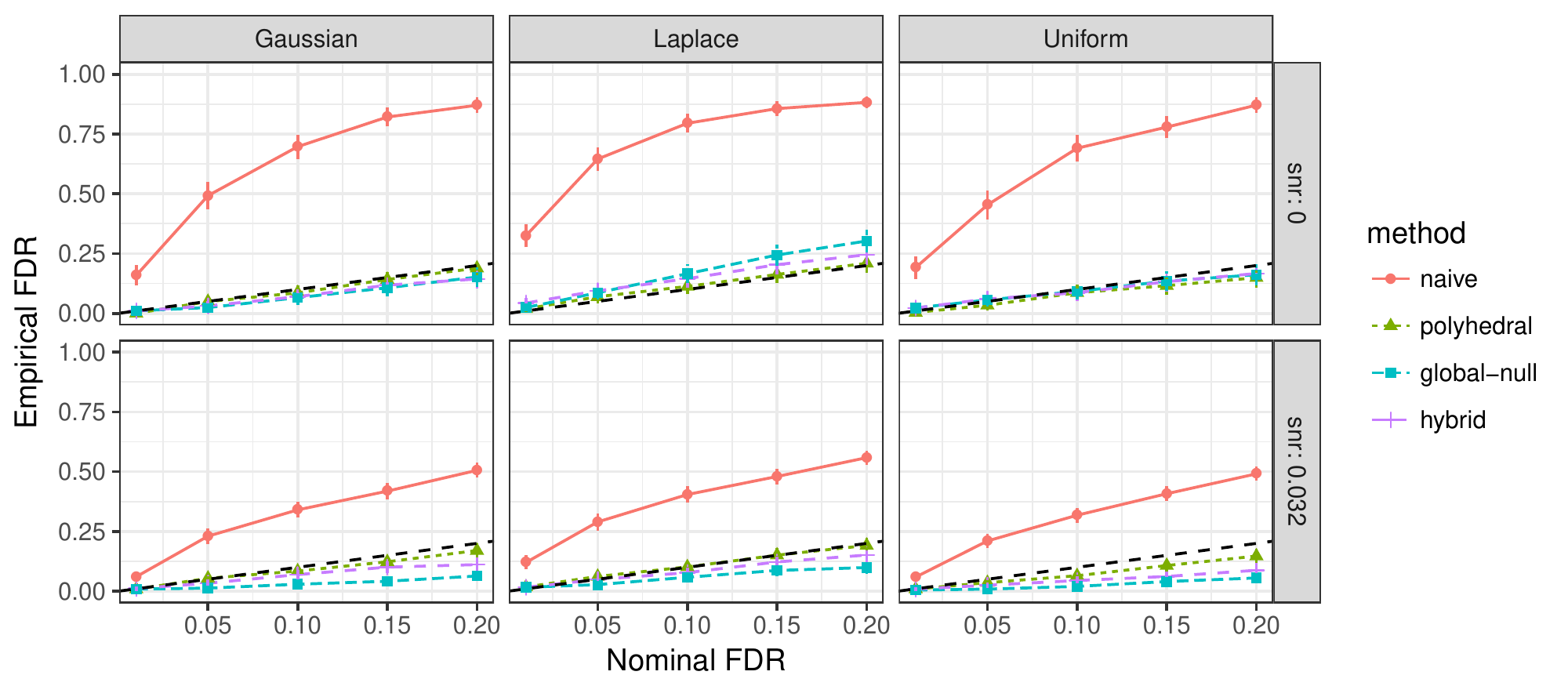}
\caption{False Discovery Rates after aggregate testing. We plot the nominal FDR vs. the empirical FDR for the unadjusted naive p-values (red solid line), the polyhedral p-values (dotted green line), p-values based on the exact post-selection null distribution (dashed-blue) and the hybrid method (dashed purple line). The diagonal line is in dashed black. The figure is faceted according to the distribution of the noise and the signal to noise ratio in the data as defined in equation \eqref{eq-snr-def}. Details about the data generation are in \S~\ref{sub-sim-testing}.}
\label{fig-fdr-sim}
\end{center}
\end{figure}

\subsection{Assessment of false discovery rate control}\label{sub-sim-testing}
We assess how well the proposed testing procedures control the FDR under the assumed model as well as under model misspecification. We generate datasets with $m = 50$, $n = 10^4$, $s = 3$, $\text{snr}\in\{0, 0.032\}$ and three types of distributions for the model residuals, all of which have a variance of $1$:
$$
\varepsilon^{(1)}_i \sim N(0, 1), \qquad 
\varepsilon^{(2)}_i \sim Laplace(\sqrt{2}), \qquad
\varepsilon^{(3)}_i \sim Unif\left(-\sqrt{12}/2, \sqrt{12}/2\right).
$$
We compare four testing procedures. BH on naive $p$-values which are not adjusted for selection, BH on the polyhedral $p$-values as computed in equation \eqref{eq-condpv-poly}, BH on the $p$-values based on the global null distribution as computed in equation \eqref{eq-condpv-GN}, and BH on the hybrid $p$-values as computed in equation \eqref{eq-condpv-hybrid}.

We plot the target FDR versus the empirical FDR in Figure \ref{fig-fdr-sim}. When there is no signal in the data and the noise is not heavy tailed, all selection adjusted methods obtain close to nominal FDR levels (top left and right panels). When the noise is heavy tailed (Laplace), the methods based on the null Gaussian distribution have higher than nominal FDR rates, while the p-values computed with the polyhedral method exhibit a more robust behavior (top center panel). When there is some signal in the data, all selection adjusted p-values control the FDR at nominal or conservative rate (bottom row). The naive p-values do not control the FDR in any of the simulation settings. Thus, we conclude that the polyhedral $p$-values  may be preferable to the hybrid and global null $p$-values if the distribution of the data is heavy-tailed. However, as we show in the next section, the hybrid method tends to have more power compared to the polyhedral method and is preferable when the residual distribution is well behaved.

\subsection{Assessment of power to detect true signal}\label{sub-power}
We compare the power to detect signal of the proposed testing procedures. We generate datasets with $m = 50$, $n = 10^4$, $s \in \{1, 2, 4, 8\}$, $\text{snr} \in\{0.032, 0.064, 0.128, 0.256\}$ and $\varepsilon_i \sim N(0, 1)$. We compare the same testing procedures as in \S~\ref{sub-sim-testing}. We measure the power to identify true signals at a nominal FDR level of $0.1$. 

We plot the results of the simulation in Figure \ref{fig-power-sim}.  In all of the simulations the naive unadjusted p-values have the most power, at the cost of an inflated FDR. When the number of non-zero regression coefficients is small, the global null and hybrid methods tend to have the most power, while all methods have a similar power when the signal is spread over a large number of regression coefficients.  
The method based on the global null distribution is the most powerful when the signal is sparse and low. The polyhedral method has more power when the signal is not too sparse or low. The hybrid method seems to adapt to the sparsity and signal strength well, exhibiting comparatively good power in all settings. 

 \begin{figure}[t]
\begin{center}
\includegraphics[width= 14 cm]{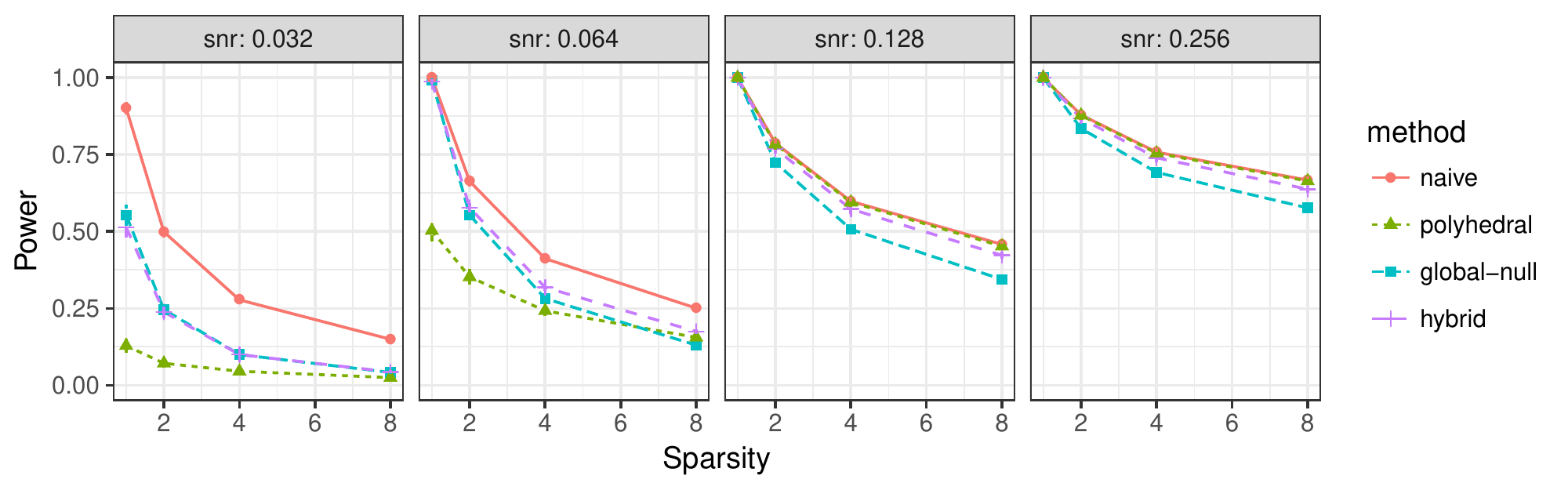}
\caption{Power to detect true signals after aggregate testing. We plot the power of the different inference method as a function as the number of non-zero regression coefficients for the unadjusted naive p-values (red solid line), the polyhedral p-values (dotted green line), p-values based on the exact post-selection null distribution (dashed-blue) and the hybrid method (dashed purple line).  The figure is faceted according to the strength of the signal as defined in equation \eqref{eq-snr-def}. Details about the data generation are in \S~\ref{sub-power}.}\label{fig-power-sim}
\end{center}
\end{figure}

\subsection{Assessment of estimation error}\label{sub-estimation}
We compare the conditional MLE to the naive, unadjusted point estimate, $\hat{\vec\beta}$ itself. We set $n \in \{5000, 10000, 15000, 20000\}$, $m \in \{5, 10, 20\}$, $s = 2$, $\text{snr} \in \{0, 0.025\}$ and sample model residuals from the normal distribution with a standard deviation of $1$

We plot the results of the simulation in Figure \ref{fig-rmse-sim}. When the dimension of $\vec\beta$ is small, the conditional MLE estimates the vector of regression coefficients better than the unadjusted MLE. The gap between the conditional and naive estimator is roughly constant across the different sample sizes when $\vec\beta = \vec 0$ because the probability of selection remains constant for all sample sizes. However, when there is some signal in the data the probability of passing the aggregate increases in the sample size and the gap between the estimators shrinks. The difference between the conditional MLE and the naive MLE decreases in the size of $\vec\beta$, to the extent that for $m = 20$ the two estimators are indistinguishable from one another. 

To see why this occurs, consider the following example. Let $y \sim N_m(0, I)$, suppose that we perform selection using a Wald test at a fixed level $t_1$ and consider the conditional likelihood function:
$$
\mathcal{L}(y) \propto -\frac{1}{2}\sum_{i=1}^{m}(y_i - \mu_i)^{2}
- \log P_\mu(S > S_{1-t_1}).
$$
As we let the dimension $m$ grow, the decrease in the value of the (unconditional) gaussian log-likelihood due to a possible shrinkage of $\mu$ grows linearly in $m$. At the same time, the additional penalty term $-\log P_{\mu} (S > S_{1 - t_1})$ remains bounded below by $-\log t_1$ regardless of the dimension of the problem.

\begin{figure}[t]
\begin{center}
\includegraphics[width= 14 cm]{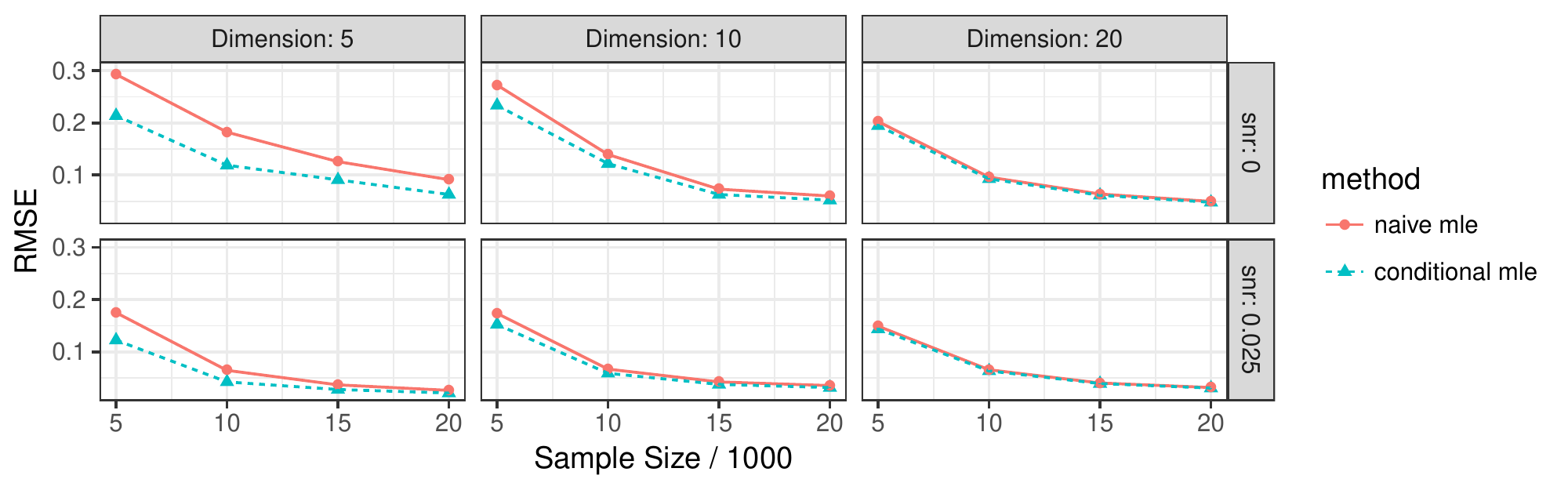}
\caption{Root mean squared error for estimation after aggregate testing. We plot the RMSE for estimating the vector of regression coefficients $\vec\beta$ with the naive unadjusted estimator $\hat{\vec\beta}$ (solid red line) and the conditional MLE $\tilde{\vec\beta}$ (dashed blue line). The figure is faceted according to the signal-to-noise ratio as defined in \eqref{eq-snr-def} and the size of $\vec\beta$, $m$.}\label{fig-rmse-sim}
\end{center}
\end{figure}

\subsection{Assessment of confidence interval coverage rates}\label{sub-cover}
In the last set of simulations, we evaluate the regime switching and polyhedral confidence intervals with respect to their coverage rates and power to determine the sign of the non-zero coefficients. We set the parameters of the simulation to $m = 20$, $n = 10^4$, $s \in \{1, 2, 4, 8\}$, $\text{snr}\in\{0.001, 0.002, 004, 008, 0.016, 0.032, 0.064, 0.128, 0.256\}$ and sample the residuals from a normal distribution with a standard deviation of $1$. 

We plot the results of the simulation in Figure \ref{fig-cover-sim}. The naive confidence intervals have a coverage rate far below nominal for signal-to-noise ratios less than $1$. As could be expected, the polyhedral method achieves the correct coverage rates up to Monte-Carlo error in all simulation settings. When there is no signal in the data the regime switching confidence intervals have close to nominal coverage. When the signal to noise ratio is moderate, the regime-switching confidence intervals are conservative because the polyhedral confidence intervals are superior to the ones based on the global-null assumption with high probability while the probability of $S$ exceeding $S_{1-t_2}$ is still not overwhelmingly large. When the signal to noise ratio is high the regime switching confidence intervals are mostly identical to the naive ones,  because the selection occurs with probability of close to $1$, and so they have the correct coverage rate. Despite being more conservative than the polyhedral confidence intervals, the regime switching confidence intervals can have better power to determine the sign. Specifically, the regime switching confidence intervals tend to have more power when the true model is sparse and the signal to noise ratio is low or moderate. 

\begin{figure}[t]
\begin{center}
\includegraphics[width= 14 cm]{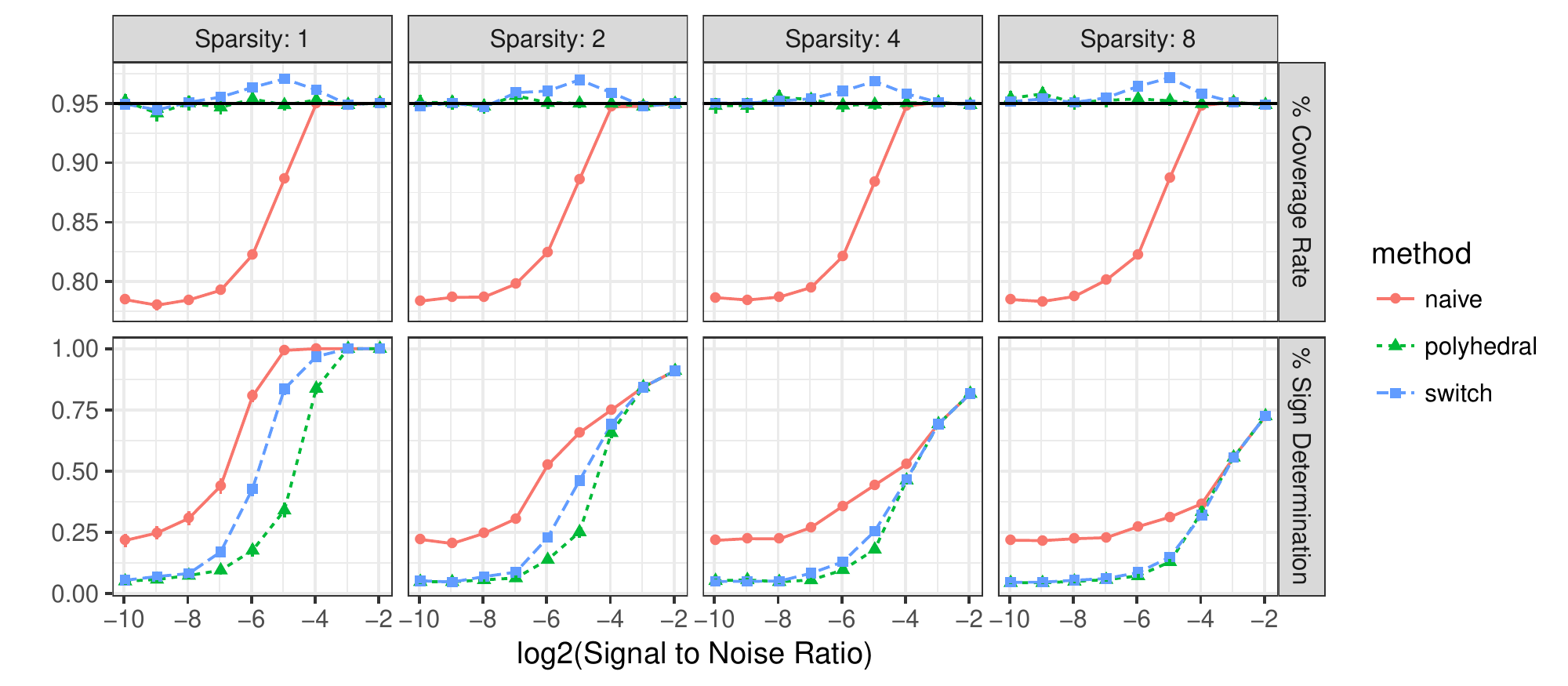}
\caption{Coverage rates and power to determine the sign of confidence intervals constructed after aggregate testing. We plot results rates for the naive unadjusted confidence intervals (solid red line), polyhedral confidence intervals (dotted green line) and regime switching intervals with $t_2 = t_1\alpha^2$ (dashed blue line).}\label{fig-cover-sim}
\end{center}
\end{figure}

\section{Application to variant selection following gene-level testing}\label{sec-motivating-example}
 Genome-wide association studies (GWAS)  involves large scale association testing of genetic markers with underlying traits. 
 Large GWAS of uncommon and rare variants are now becoming increasingly feasible with the advent of newer genotyping chips, cheaper sequencing technologies and sophisticated algorithms that allow imputation of low-frequency variants based on combinations of common variants that are already genotyped in large GWAS. Thus,  association studies of rare variants is a very active area of research  and some of the early studies have already begun to report their findings, e.g., \cite {UK10K} and \cite{Fuchsberger16}.

As the statistical power for testing association of traits with individual rare variants may be low,  it has been suggested that tests for genetic associations be performed at an aggregated level by combining signals across multiple variants within  genomic regions such as those defined by functional units of genes \citep{Madsen09, Morris10, Neale11, Wu11, Lee12, Sun13}. These tests can be divided into sum-based tests (which aggregate the variant statistics by a linear combination), 
variance component tests (which aggregate the squared variant statistics by a linear combination), or combined (sum-based and variance component) tests. See \cite{Derkach14} for a review. There is, however, currently a lack of rigorous methods for variant selection following gene-level association testing.

The Dallas Heart Study (DHS) \citep{Romeo07} considered four genes of potential interest, genotyped in 3549 individuals (601 hispanic, 1830 non-hispanic black, 1043 non-hispanic white, 75 other ethnicities). We focus on the  32 variants in ANGPTL4,
which includes both rare and common variants. Table \ref{tab1}, column 2, shows the number of subjects with rare variants.

To detect associations with triglyceride (TG), a metabolism trait, we applied the variance component test SKAT of \cite{Wu11} %(with default weights)
, with outcome TG on a logarithmic scale, while adjusting for the covariates race, sex, and age on a logarithmic scale.  ANGPTL4 is one of the four genes in the ANGTPTL family \citep{Romeo09}. Using a Bonferroni correction for testing the genes in the family,  ANGTPL4 is selected for post-selection inference if the SKAT test $p$-value is at most $0.05/4$. To identify the potentially susceptible variants, we proceeded as suggested in section \S~\ref{sec-testing}.

The SKAT $p$-value for ANGTPL4 was $7.5\times 10^{-5}$ and therefore the gene was selected. Table \ref{tab1}, column 6 lists the weights assigned to each variant in the SKAT test. These weights were obtained using the default settings of the publicly available R library SKAT.
Figure \ref{fig1} and Table \ref{tab1} provide, respectively, a graphical display and the actual numbers for the naive (i.e., unconditional, not corrected for selection) and conditional $p$-values. When using the polyhedral method described in Section \S~\ref{subsec-poly}, one variant, $E40K$, passes the Bonferroni threshold for FWER control at the $0.05$ level. When using the hybrid method described in Section \S~\ref{subsec-hybrid}, two variants, $E40K$ and $R278Q$ are identified at an FDR level of $0.1$. This example demonstrates that it is possible to make further discoveries in a follow-up analysis after aggregate testing, to identify which underlying variants drive the signal. The variant $E40K$ is indeed associated with TG, as validated by external studies \citep{Dewey16}.  

\begin{figure}[t]
  \centering
     \includegraphics[width=14 cm]{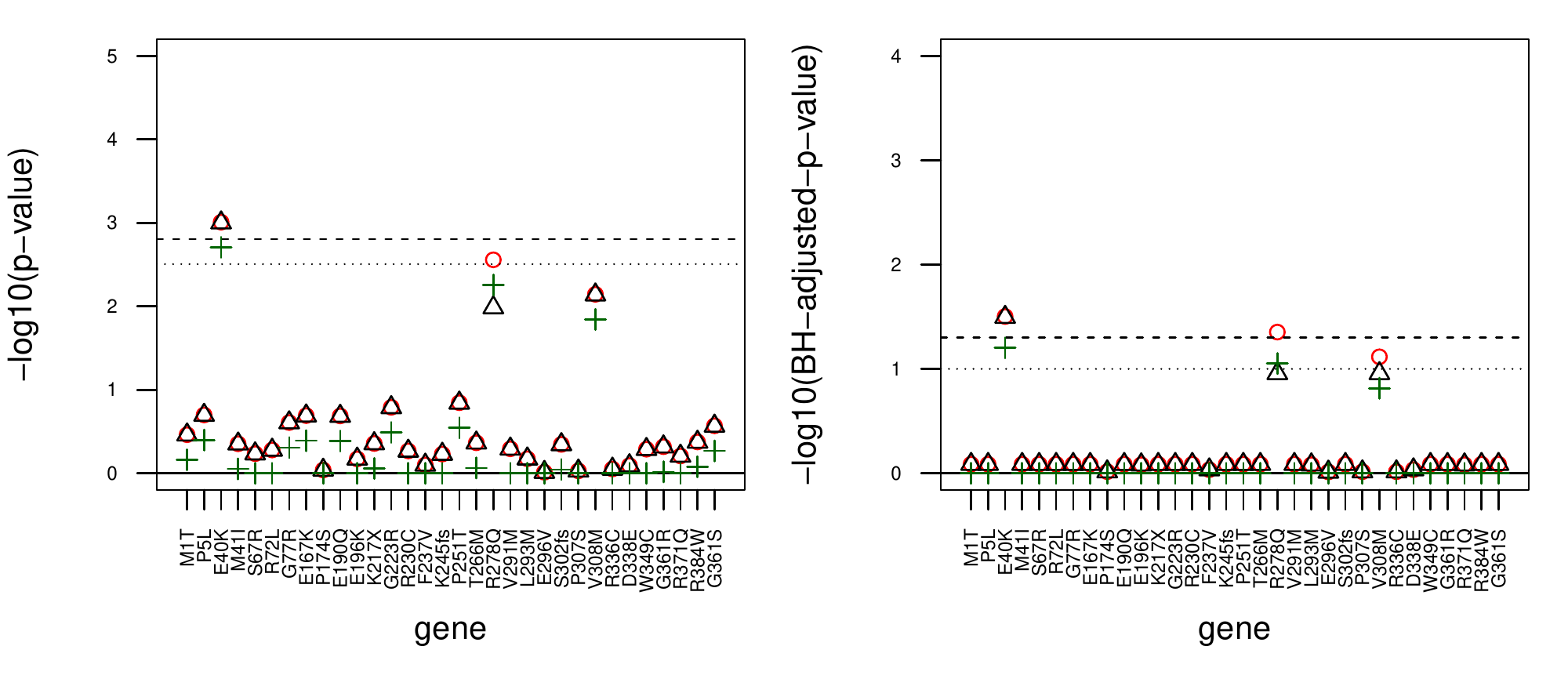} 
  \caption{The naive and two types selection adjusted $p$-values on a $-\log 10$ scale (left panel) and FDR adjusted p-values on a $-\log 10$ scale (right panel) for the 32 variants. The p-values plotted are Naive unadjusted p-values  (red circles), conditional p-values based on the polyhedral lemma (black triangles) and conditional p-values based on the hybrid method (green plus). The dotted line marks a multiplicity adjusted threshold of $0.1$ (FWER in the left panel, FDR in the right) and the dashed line marks a multiplicity adjusted threshold of $0.05$.}
\label{fig1}
\end{figure}

\section{Discussion}\label{sec-discuss}
In this work, we provided valid inference for linear contrasts of estimated parameters, after the aggregate test passed a pre-defined threshold.  
For the post-selection inference we suggest in this paper, we only need the summary statistics for the selected group of interest, and knowledge of the selection threshold $t_1$. The selection threshold does not have to be fixed. For example, a data dependent threshold will be valid if the groups are independent and via the BH procedure, or any other simple selection rule (as defined in 
 \citealp{Benjamini14}). If the data of all the groups is available, then there remains an open question of how to choose $t_1$ in order to maximize the chance of discovery for individual hypotheses (assuming that an error control guarantee at the group level is not necessary). Data adaptive methods for choosing $t_1$ may invalidate the post-selection inference. We are currently investigating potential approaches, but they are outside the scope of this manuscript.

Our methods can be extended  to tree structured hypothesis tests in a straightforward manner. See \cite{Bogomolov17} and the references within for state-of-the-art  work on hierarchical testing when there are more than two layers. An interesting genomic application is the following.  Within a selected gene, the tests may be further divided naturally into subgroups. For example, clusters of SNPs within a gene \citep{Yoo16}. It may be of interest to develop a multi-level analysis, where following selection we first examine the subgroups, and only then the individual effects. 

In this work we suggested switching regimes to adapt to the different unknown sparsity of the estimated effects. We observed that by combining a powerful method for  the sparse setting with a powerful method for the non-sparse setting, we get a method that has overall good performance.  Such an approach can be very useful in genomic applications, where the signal is expected to be sparse in some groups but non-sparse in others. 
The switching regime approach may benefit other post-selection settings as well, e.g., confidence intervals for the selected parameters in a  regression model.

\section{Supplementary material}
An R implementation of the methods in this paper is available in \url{https://github.com/ammeir2/PSAT} and will be available (soon) in the  Bioconductor package PSAT.  

\section*{Acknowledgement}
RH and AM contributed equally to this paper. The authors thank Andriy Derkach for providing his R implementation of the SKAT test, and for helpful conversations on the rare variant testing applications. Mathias Drton for helpful discussions regarding the properties of the conditional MLE. The authors are also grateful for Bin Zhu for helpful discussions of the Dallas Heart Study example. RH is supported by Israel Science Foundation grant no. 0603616831.

\newpage

\begin{center}
\begin{table}%[htbp]
%\begin{minipage}[b]{5\linewidth}
\caption{\label{tab1}\scriptsize{For the 32 varaints in ANGPTL4, the number of subjects with rare variants (column 2), the estimated effect size (column 3), the conditional $p$-value (column 4), the original $p$-value (column 5), the default Beta-density weight in SKAT (column 6), and the contribution of the variant to the SKAT statistic $\sum_{j=1}^{32} w_{m,j} U_j^2$ (column 7).  Variant $E40K$ has conditional $p$-value below the $0.05/32=0.0016$, and is therefore discovered by Bonferroni, with a guarantee of conditional FWER control at the 0.05 level. The contribution of variant R278Q is by far the largest towards the SKAT statistic, and therefore the conditional $p$-value is larger than the naive $p$-value for R278Q. For all other variants in this gene, the conditional $p$ values coincide with the naive $p$-values. This is expected when by conditioning on the test statistics of all the other variants, the SKAT test singificance is guaranteed regardless of the single variant test statistic value.
}} 
\scriptsize
%\centering
\fbox{
\begin{tabular}{*{8}{c}}
  \hline
Variant & \# rare variants & $\hat \beta$  & hybrid PV & conditional PV & naive PV & SKAT weight & $w_{m,j} U_j^2$ \\ 
  \hline
M1T & 1.0000 & 0.8967 & 0.6872 & 0.3434 & 0.3434 & 4.9831 & 19.9657 \\ 
  P5L & 2.0000 & 0.8588 & 0.3999 & 0.1995 & 0.1995 & 4.9663 & 74.7223 \\ 
  E40K & 50.0000 & -0.4490 & 0.0020 & 0.0010 & 0.0010 & 4.2172 & 7687.4667 \\ 
  M41I & 28.0000 & 0.1403 & 0.8860 & 0.4333 & 0.4333 & 4.5466 & 288.0531 \\ 
  S67R & 2.0000 & 0.3681 & 1.000 & 0.5827 & 0.5827 & 4.9663 & 15.9910 \\ 
  R72L & 3.0000 & -0.3468 & 1.000 & 0.5262 & 0.5262 & 4.9495 & 22.0274 \\ 
  G77R & 1.0000 & -1.0913 & 0.4928 & 0.2489 & 0.2489 & 4.9831 & 29.5739 \\ 
  E167K & 1.0000 & -1.2002 & 0.4072 & 0.2049 & 0.2049 & 4.9831 & 34.4125 \\ 
  P174S & 1.0000 & 0.1040 & 1.000 & 0.9125 & 0.9125 & 4.9831 & 0.4010 \\ 
  E190Q & 32.0000 & 0.2140 & 0.4108 & 0.2054 & 0.2054 & 4.4850 & 1199.4289 \\ 
  E196K & 1.0000 & -0.3991 & 1.000 & 0.6733 & 0.6733 & 4.9831 & 3.5122 \\ 
  K217X & 1.0000 & -0.7300 & 0.8704 &0.4406 & 0.4406 & 4.9831 & 12.4116 \\ 
  G223R & 1.0000 & -1.3242 & 0.3239 & 0.1621 & 0.1621 & 4.9831 & 40.5690 \\ 
  R230C & 1.0000 & -0.5784 & 1.000 & 0.5412 & 0.5412 & 4.9831 & 7.6586 \\ 
  F237V & 1.0000 & -0.2453 & 1.000 & 0.7956 & 0.7956 & 4.9831 & 1.2266 \\ 
  K245fs & 1.0000 & 0.5157 &  1.000 & 0.5858 & 0.5858 & 4.9831 & 6.6047 \\ 
  P251T & 1.0000 & 1.3860 & 0.2854 & 0.1432 & 0.1432 & 4.9831 & 49.3013 \\ 
  T266M & 1887.0000 & 0.0230 & 0.8619 & 0.2454 & 0.2454 & 0.0006 & 0.0002 \\ 
  R278Q & 207.0000 & -0.1945 & 0.0055 & 0.0103 & 0.0023 & 2.4309 & 10667.1021 \\ 
  V291M & 1.0000 & -0.6203 & 1.000 & 0.5123 & 0.5123 & 4.9831 & 9.5533 \\ 
  L293M & 1.0000 & 0.4021 & 1.000 & 0.6717 & 0.6717 & 4.9831 & 1.3213 \\ 
  E296V & 1.0000 & -0.0308 & 1.000 & 0.9741 & 0.9741 & 4.9831 & 0.0235 \\ 
  S302fs & 1.0000 & -0.7089 & 0.9012 & 0.4539 & 0.4539 & 4.9831 & 12.4794 \\ 
  P307S & 1.0000 & -0.0793 & 1.000 & 0.9333 & 0.9333 & 4.9831 & 0.0274 \\ 
  V308M & 3.0000 & 1.4719 & 0.0143 & 0.0071 & 0.0071 & 4.9495 & 477.6972 \\ 
  R336C & 7.0000 & -0.0469 & 1.000 & 0.8957 & 0.8957 & 4.8829 & 2.5696 \\ 
  D338E & 1.0000 & 0.2314 & 1.000 & 0.8073 & 0.8073 & 4.9831 & 0.0339 \\ 
  W349C & 1.0000 & -0.6120 & 1.000 & 0.5179 & 0.5179 & 4.9831 & 9.3008 \\ 
  G361R & 2.0000 & 0.4744 & 0.9598 & 0.4784 & 0.4784 & 4.9663 & 23.2973 \\ 
  R371Q & 1.0000 & 0.4724 & 1.000 & 0.6177 & 0.6177 & 4.9831 & 5.5410 \\ 
  R384W & 1.0000 & -0.7610 & 0.8420 & 0.4225 & 0.4225 & 4.9831 & 22.6686 \\ 
  G361S & 1.0000 & -1.0389 & 0.5388 & 0.2724 & 0.2724 & 4.9831 & 26.7995 \\ 
   \hline
\end{tabular}}
%\end{minipage}
\end{table}

\end{center}

\appendix

\section{Proof of Theorem \ref{thm1}}\label{app-thm1-proof}
In order to compute the distribution of a linear combination of $\hat {\vec \beta}$ within selected regions, it is useful to first represent the selection event in simple form (the representation is based on the one used for post-model selection in \citealp{Lee16}).

\begin{lemma}\label{lem1}
For an arbitrary linear combination ${\vec \eta}'{\hat {\vec \beta}}$, let  $c = ({\vec \eta}'{\bf \Sigma} {\vec \eta})^{-1}{\bf \Sigma} {\vec \eta}$ and 
${\vec W} = (I_{m}-c{\vec \eta}'){\hat {\vec \beta}}$, where $I_{m}$ is the $m\times m$ identity matrix. 
The selection event $S \geq S_{1-t_1}$  can be rewritten in terms of ${\vec \eta}'{\hat {\vec \beta}}$ as follows: 
$$\mathcal A(\vec W) = \left\{ {\vec \eta}'{\hat {\vec \beta}} \geq A({\vec W}), {\vec \eta}'{\hat {\vec \beta}} \leq B({\vec W})\right\},$$
where 
\begin{equation*}
\begin{array}{rl}
A({\vec W}) = \left\{
\begin{array}{rl}
\frac{-2{\vec W}'{\bf K}c+\sqrt{\Delta}}{2c'{\bf K}c} & \text{if } \Delta\geq 0,\\
-\infty & \text{if }\Delta < 0,\\
\end{array} \right. 
 & B({\vec W}) = \left\{
\begin{array}{rl}
\frac{-2{\vec W}'{\bf K}c-\sqrt{\Delta}}{2c'{\bf K}c} & \text{if } \Delta\geq 0,\\
\infty & \text{if }\Delta < 0,\\
\end{array} \right. 
\end{array} 
\end{equation*}
for $\Delta= 4({\vec W}'{\bf K}c)^2 - 4(c'{\bf K}c)({\vec W}'{\bf K}{\vec W}-S_{1-t_1})$. 
\end{lemma}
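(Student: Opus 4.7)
\textbf{Proof plan for Lemma \ref{lem1}.} The plan is to parametrize $\hat{\vec\beta}$ by the scalar $u = \vec\eta' \hat{\vec\beta}$ together with the residual $\vec W = ({\bf I}_m - c\vec\eta')\hat{\vec\beta}$, and then to rewrite the quadratic selection event $S = \hat{\vec\beta}' {\bf K} \hat{\vec\beta} \geq S_{1-t_1}$ as a quadratic inequality in the single scalar $u$ whose coefficients depend only on $\vec W$. First I would verify the identity $\hat{\vec\beta} = \vec W + c\,u$, which is immediate from the definition of $\vec W$ together with $\vec\eta' c = (\vec\eta'{\bf\Sigma}\vec\eta)^{-1}(\vec\eta'{\bf\Sigma}\vec\eta) = 1$. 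This is the same change of variables used in the polyhedral lemma of \cite{Lee16}; the particular choice of $c$ also makes $\vec W$ and $u$ uncorrelated, hence independent under joint normality, which is precisely what Theorem \ref{thm1} will exploit later.

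Plugging the decomposition into $S$ and expanding gives
\begin{equation*}
S = (\vec W + cu)'{\bf K}(\vec W + cu) = (c'{\bf K}c)\,u^2 + 2(\vec W'{\bf K}c)\,u + \vec W'{\bf K}\vec W,
\end{equation*}
so $S \geq S_{1-t_1}$ is equivalent, for fixed $\vec W$, to
\begin{equation*}
(c'{\bf K}c)\,u^2 + 2(\vec W'{\bf K}c)\,u + (\vec W'{\bf K}\vec W - S_{1-t_1}) \geq 0.
\end{equation*}
Since ${\bf K}$ is positive semi-definite and $c'{\bf K}c > 0$ (under the mild non-degeneracy that ${\bf K}$ is not singular in the direction of $c$; note that the main application ${\bf K} = {\bf\Sigma}^{-1}$ is strictly positive definite), this is an upward-opening parabola in $u$ with discriminant exactly equal to the $\Delta$ given in the statement.

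It then remains to solve this quadratic inequality. When $\Delta \geq 0$, the quadratic formula yields real roots $B(\vec W) \leq A(\vec W)$ matching the formulas in the statement, and the solution set is the complement of the open interval between them, namely $\{u \leq B(\vec W)\} \cup \{u \geq A(\vec W)\}$. When $\Delta < 0$, the quadratic is strictly positive for every $u$, so the selection event is automatically satisfied given $\vec W$; this matches the conventions $A(\vec W) = -\infty$ and $B(\vec W) = +\infty$. The main ``obstacle'' is not conceptual but bookkeeping: confirming the sign conventions so that $A$ picks out the larger root (with $+\sqrt{\Delta}$) and $B$ the smaller (with $-\sqrt{\Delta}$), and reading the set in the statement as the union of the two one-sided events (an intersection would be empty whenever $\Delta>0$, since $A > B$). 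Once this is pinned down, the claimed description of $\mathcal{A}(\vec W)$ follows from the quadratic formula in a single line.
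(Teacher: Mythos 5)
Your proposal is correct and follows exactly the paper's own argument: decompose $\hat{\vec\beta} = \vec W + c\,\vec\eta'\hat{\vec\beta}$ (using $\vec\eta'c=1$), expand $S$ as an upward-opening quadratic in $\vec\eta'\hat{\vec\beta}$ with discriminant $\Delta$, and read off the solution set of the resulting inequality. Your remark that the event must be read as the union $\{\vec\eta'\hat{\vec\beta}\le B(\vec W)\}\cup\{\vec\eta'\hat{\vec\beta}\ge A(\vec W)\}$ is consistent with how the paper itself uses the lemma immediately afterwards.
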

\begin{proof}
Decomposing ${\hat {\vec \beta}} = {\vec W}+c{\vec \eta}'{\hat {\vec \beta}}$, the result is immediate from rewriting $S=({\vec W}+c{\vec \eta}'{\hat {\vec \beta}})'{\bf K} ({\vec W}+c{\vec \eta}'{\hat {\vec \beta}})$  as a quadratic polynomial with argument ${\vec \eta}'{\hat {\vec \beta}}$. The 
 selection event is therefore $$\{ S\geq S_{1-t_1}\}   = \left\{[{\vec \eta}'{\hat {\vec \beta}}]^2c'{\bf K}c+ [{\vec \eta}'{\hat {\vec \beta}}]2{\vec W}'{\bf K}c + {\vec W}'{\bf K}{\vec W}-S_{1-t_1} >0\right\}.$$
\end{proof}

Since the covariance between ${\vec W}$ and ${\vec \eta}'{\hat {\vec \beta}}$ is zero, it follows from Lemma \ref{lem1} that if ${\vec \eta}'{\hat {\vec \beta}}$ is (approximately) normal, then the boundaries of the selection event are independent of ${\vec \eta}'{\hat {\vec \beta}}$. Therefore,  
conditional on ${\vec W}$ and on the selection event $\{S\geq S_{1-t_1} \}$,  ${\vec \eta}'{\hat {\vec \beta}}$ has a truncated normal distribution, truncated at the values $(-\infty, B({\vec W})]\cup [A({\vec W}), \infty)$. 

\section{Proof of Theorem \ref{thm-WALD-GN}}\label{app-thm-WALD-GN-proof} 
\proof
We shall show this without loss of generality for $j=1$, i.e.,  for testing $H_1: \beta_1=0$.  Since ${\bf K} = {\bf \Sigma}^{(-1)}$, it follows that $\vec W' K c = 0$. Therefore, the selection event is 
$$\{ S\geq S_{1-t_1}\}   = \left\{[{\vec \eta}'{\hat {\vec \beta}}]^2c'{\bf K}c+  {\vec W}'{\bf K}{\vec W}-S_{1-t_1} >0\right\}.$$ Clearly, the truncation will be smaller (i.e., $A(\vec W)$ smaller and $B(\vec W)$ larger), the larger ${\vec W}'{\bf K}{\vec W}$ is. It is clear from  the dependence of the distribution of  ${\vec W}'{\bf K}{\vec W}$ on $(0,\beta_2,\ldots,\beta_m)$ that it will be stochastically smallest for $(0,\beta_2,\ldots,\beta_m)=\vec 0. $ 
Therefore, 
\begin{eqnarray}
&&Pr_{(0,\beta_2,\ldots,\beta_m)'}(P_1\leq x \mid S>S_{1-t_1}) =  E\left[Pr_{(0,\beta_2,\ldots,\beta_m)'}(P_1\leq x \mid S>S_{1-t_1}, \vec W)\mid S>S_{1-t_1} \right] \nonumber \\
&& =  E_{(0,\beta_2,\ldots,\beta_m)}\left[1-F_{0, SE_1^2}^{\{\hat \beta_1\geq A(\vec W), \hat \beta_1\leq B(\vec W) \}}(x)\mid S>S_{1-t_1} \right]\nonumber \\ && \leq  E_{\vec 0}\left[1-F_{0, SE_1^2}^{\{\hat \beta_1\geq A(\vec W), \hat \beta_1\leq B(\vec W) \}}(x) \mid S>S_{1-t_1} \right] =  Pr_{\vec 0 }(P_1\leq x \mid S>S_{1-t_1}) = x, \nonumber
\end{eqnarray}
where the inequality follows since $1-F_{0, SE_1^2}^{\{\hat \beta_1\geq A(\vec W), \hat \beta_1\leq B(\vec W)\}}(x)$ is an increasing function of $A(\vec W)$ and a decreasing function of $B(\vec W)$, i.e., a decreasing function of $\vec W'{\bf K} \vec W$, so the expecation would be largest when $\vec W'{\bf K} \vec W$ is stochastically smallest, i.e., at $\vec \beta = \vec 0$. 
\qed
\section{Proof of Lemma \ref{lemma-gaussiancorrelationineq}}\label{app-lemma-gaussiancorrelationineq}
\proof
%Let $\hat{\vec\beta} \sim N(0, \Sigma)$, $K$ be a positive definite matrix and $S = \hat{\vec\beta}'K\hat{\vec\beta}$. 
For arbitrary fixed $b, s > 0$, define the following sets for some fixed index $j\in \{1,\ldots,m\}$
$$
A := \{\hat{\vec\beta}: \; S < s\}, \qquad
B := \{\hat{\vec\beta}: \hat\beta_j^{2} < b \}
$$
The sets $A$ and $B$ are both convex and symmetric about the origin if $\vec \beta = \vec 0$. By the Gaussian correlation inequality \citep{Royen14, Latala15} we have:
\begin{equation}\label{eq-AB}
Pr_{\vec \beta = \vec 0}(A,B) \geq Pr_{\vec \beta = \vec 0}(A)Pr_{\vec \beta = \vec 0}(B)
\end{equation}
The left-hand side of equation \eqref{eq-AB} can be written as
$$
Pr_{\vec \beta = \vec 0}(A,B) = 1 - Pr_{\vec \beta = \vec 0}(\hat\beta_j^{2} > b) - Pr_{\vec \beta = \vec 0}(S > s) + Pr_{\vec \beta = \vec 0}(\hat\beta_j^{2} > b, S > s),
$$
and similarly the right-hand side can be written as
$$
Pr_{\vec \beta = \vec 0}(A)Pr_{\vec \beta = \vec 0}(B) = 1 - Pr_{\vec \beta = \vec 0}(\hat\beta_j^{2} > b) - Pr_{\vec \beta = \vec 0}(S > s) + Pr_{\vec \beta = \vec 0}(\hat\beta_j^{2} > b) Pr_{\vec \beta = \vec 0}(S > s).
$$
Subtracting $1 - Pr_{\vec \beta = \vec 0}(\hat\beta_j^{2} > b) - Pr_{\vec \beta = \vec 0}(S > s)$ from both sides of \eqref{eq-AB} yields:
$$
Pr_{\vec \beta = \vec 0}(\hat\beta_j^{2} > b, S > s) \geq Pr_{\vec \beta = \vec 0}(\hat\beta_j^{2} > b) Pr_{\vec \beta = \vec 0}(S > s).
$$
Finally, 
$$
Pr_{\vec \beta = \vec 0}(\hat\beta_j^{2} > b | S > s) = \frac{Pr_{\vec \beta = \vec 0}(\hat\beta_j^{2} > b,S > s)}{Pr_{\vec \beta = \vec 0}(S > s)} 
\geq  \frac{Pr_{\vec \beta = \vec 0}(\hat\beta_j^{2} > b) Pr_{\vec \beta = \vec 0}(S > s)}{Pr_{\vec \beta = \vec 0}(S > s)} 
= Pr_{\vec \beta = \vec 0}(\hat\beta_j^{2} > b)
$$
%The left-hand side is $p'_{j,GN}$ and the right-hand side is $p_j'$ when $b$ and $s$ are the realized test statistics, and thus the result follows. 
\qed
\section{Proof of Theorem \ref{thm:linesearch}}\label{app-linesearch}
\proof
Let $\hat{\vec\beta}\sim N(\vec\beta,\bf\Sigma)$. Now, suppose that we are interested in solving the optimization problem:
$$
\max_{\vec\beta} \ell(\vec\beta) - \log Pr_{\vec\beta}(S >S_{1-t_1}).
$$
The above optimization problem can be rewritten as:
\begin{equation}\label{eq-gamma-split}
\max_{\gamma}\max_{\vec\beta \in B(\gamma)} \ell(\vec\beta) - \log \gamma,
\end{equation}
where
$$
B(\gamma) := \{\vec\beta: Pr_{\vec\beta}(S>S_{1-t_1}) = \gamma\}.
$$
In \eqref{eq-gamma-split} we divided our optimization problem into two parts. First, we must compute the maximizer of the likelihood for each power level $\gamma$ and then, we must maximize over $\gamma$ to find the global maximizer of the likelihood. The theorem hinges on the fact that this inner optimization problem has a closed form solution which we derive next. 

If $\bf K = \bf\Sigma^{(-1)}$ then the distribution of the test statistic is a non-central chi-square distribution, the parameters of which are the degrees of freedom (a known quantity) and the non-centrality parameter:
$$
\vec\beta'{\bf\Sigma^{(-1)}}\vec\beta.
$$
Thus, for each value of $\gamma \geq t_1$, there exists a $\delta \geq 0$ such that:
\begin{align} \nonumber
&\max_{\vec\beta\in B(\gamma)} \ell(\vec\beta)\\
=& \max_{\vec\beta} \ell(\vec\beta) , \qquad 
\text{s.t. } \vec\beta' {\bf\Sigma^{(-1)}} \vec\beta = \delta.
\label{eq-delta-constraint}
\end{align}
Now, for any $\delta \leq \hat{\vec\beta}'\bf\Sigma^{(-1)}\hat{\vec\beta}$ there exists a $c \geq 0$ such that the solution to \eqref{eq-delta-constraint} is given by:
$$
\max_{\vec\beta} \ell(\vec\beta) - c \vec\beta'{\bf\Sigma^{(-1)}}\vec\beta.
$$ 
This last problem, is a simple Tikhonov regularization problem, the solution which is given by $(1 + c)^{-1}\hat{\vec\beta}$ with:
$$
c = \sqrt{\frac{\hat{\vec\beta}'{\bf\Sigma^{(-1)}}\hat{\vec\beta}}{\delta}} - 1.
$$
Thus, for $\delta = 0$, $c = \infty$ and for $\delta = \hat{\vec\beta}'{\bf\Sigma^{(-1)}}\hat{\vec\beta}$, $c=0$ and we recover the least squares solution. From this, we can infer that $(1+c)^{-1}\in[0, 1]$

Because $(1 + c)^{-1} \in [0, 1]$ and all of the solution to the inner problem in \eqref{eq-gamma-split} are of the form $(1+c)^{-1}\hat{\vec\beta}$, we can conclude that the maximum likelihood estimator is given by:
$$
\tilde{\vec\beta} = \arg\max_{\lambda\in[0,1]}\ell(\lambda\hat{\vec\beta}) 
- \log Pr_{\lambda\hat{\vec\beta}}(S > S_{1-t_{1}}).
$$
\qed

\section{Proof of Theorem \ref{thm-ci}}\label{app-pf-thm-ci} 
Denote by $NC_j$ the event in which a non-covering confidence interval was constructed for $\beta_j$, by $NNC_j$ the event in which a naive confidence interval does not cover $\beta_j$ and by $CNC_j$ the event in which a conditional confidence did not cover $\beta_j$. 

In our procedure, if $S\geq S_{1-t_2}$, the confidence interval is based on the unconditional likelihood, and it is at level $1-\alpha$; if $S_{1-t_1}\leq   S < S_{1-t_2}$, the confidence interval is based on the exact conditional likelihood at $\vec \beta = \vec 0$, and it is at level  $1-(\alpha-t_2/t_1)$; otherwise, no confidence interval is constructed. Therefore, 
$$
Pr_{\vec\beta}(NC_j \mid S > S_{1 - t_1}) = 
$$
\begin{align*}
&= Pr_{\vec\beta}(NNC_j , S > S_{1- t_2}\mid S > S_{1 - t_1}) +
Pr_{\vec\beta}(CNC_j, S < S_{1 - t_2} \mid S > S_{1 - t_1})  \\
&\leq  Pr_{\vec\beta}(S > S_{1 - t_2} \mid S > S_{1 - t_1}) + 
Pr_{\vec\beta}(CNC_j \mid S > S_{1 - t_1}).
\end{align*}
If $\vec \beta = \vec 0$, then $Pr_{\vec 0}(S\geq S_{1-t_2}\mid S>S_{1-t_1}) = t_2/t_1$ and $Pr_{\vec 0}(CNC_j\mid S< S_{1-t_2}) = \alpha - t_2/t_1$. Therefore, 
$Pr_{\vec 0}(NC_j\mid S>S_{1-t_1})\leq \alpha$. 

If $\vec \beta \neq  \vec 0$, then $\lim_{n\rightarrow \infty} Pr_{\vec \beta}(S>S_{1-t_2}) = 1$. This follows for the linear model,  since $E(\hat {\vec \beta} - \vec \beta) = \vec 0$ and $var\left(\hat {\vec \beta}\right) = ({\bf X}'{\bf X} )^{(-1)}var(\epsilon_1)$. This also follows for the logistic model, since 
large $n$, $E(\hat {\vec \beta} - \vec \beta) = O(\frac 1n)$ and $var\left(\hat {\vec \beta}\right) = ({\bf X}' {\bf W} {\bf X})^{(-1)}(1+O(\frac 1n))$.  Therefore, 
\begin{eqnarray}
\lim_{n\rightarrow\infty} Pr_{\vec \beta}(NC_j\mid S>S_{1-t_1})=\lim_{n\rightarrow \infty} Pr_{\vec \beta}(NNC_j\mid S\geq S_{1-t_2}) = \alpha \nonumber. 
\end{eqnarray}
\qed

\section{MLE computation for generalized linear models}\label{appendix-mle-computation}
Suppose that we observe independent draws $(y_1,\vec X_1),...,(y_n,\vec X_n)$ where $y_i | \vec X_i \sim f_{\vec\beta}$ follows an exponential family distribution with $E(y_i) = g^{(-1)}(\vec X_i \vec\beta)$ for some link function $g$. If $n$ is small or ${\bf X}$ is sparse, it may be undesirable to assume that $\hat{\vec\beta}$ has a normal distribution. In such a case, we can obtain an exact MLE using a stochastic gradient method while sampling the gradient steps from the post-selection distribution of $\vec{y}$ given ${\bf X}$. We start by describing one possible algorithm for sampling from the desired post-selection distribution, and then briefly discuss the conditions for convergence of the stochastic gradient method. 

If $t_1$ is not too small or $\vec\beta$ is sufficiently large, one can obtain samples from the post-selection distribution of $\vec{y} | S > S_{1- t_1}$ using a rejection sampler. Otherwise, it possible to use the following, general purpose Metropolis-Hastings algorithm. Fix a parameter value $\vec\beta$, initialize $\vec y_0 = y$ and set some (preferably large) integer $J > 0$. Then, repeat for $j \in \{1,\dots,J\}$ and $i\in\{1,\dots,n\}$:
\begin{enumerate}
\item Sample $y_{ij} \sim f_{\vec\beta}(y_{i} \mid  \vec X_i)$. 
\item Compute the test statistic $S_{ij}$ as determined by the current state of the chain.
\item If $S_{ij} < S_{1 - t_1}$ then set $y_{ij} \leftarrow y_{i(j-1)}$.
\end{enumerate}
The algorithm defined by these steps is guaranteed to converge to the correct post-selection distribution of $\vec{y}|S > S_{1 - t_1}$. 

Given samples form the post-selection distribution of $\vec{y}(\vec\beta)$ taken at a specific parameter value $\vec\beta$ we can take stochastic gradient steps of the form:
$$
\tilde{\vec\beta}^{t + 1} = \tilde{\vec\beta}^{t} + T(\vec y) - T(\vec y(\tilde{\vec\beta}^{t}))
$$
where $T(\vec y)$ is the observed sufficient statistic for $\vec\beta$ and $T(\vec y(\tilde{\vec\beta}^{t}))$ is the sampled sufficient statistic. In order to guarantee the convergence of the stochastic gradient steps to the correct MLE, one must verify that there exists a constant $A  > 0$ such that for all $\vec\beta$ and $j \in \{1,...,m\}$:
\begin{equation}\label{eq-w-bound}
E_{\vec\beta}\| T_j(\vec y) - E_{\vec\beta}[T_j(\vec y)|S > S_{1 - t_1}] \|^{2}
 \leq 
 A \left(1 + \|T(\vec y) - E_{\vec\beta}[T(\vec y |S > S_{1-t_1}]\|^{2}\right).
\end{equation}
In \eqref{eq-w-bound}, $T(\vec y)$ on the right-hand side of the inequality is the observed sufficient statistic. This condition holds for example, for logistic regression and linear regression with normal errors. See \citet{Bertsekas00} and \citet{Meir17} for details.

\section{Theory and methods for inference after screening with linear tests}\label{sec-linear-aggregate-test}
Suppose that we observe $\hat{\vec\beta}\sim N_m(\vec\beta, {\bf \Sigma})$ and estimate $\vec\beta$ if and only if:
$$
\vec a' \hat{\vec\beta} < l \qquad \text{or} \qquad \vec a'\hat{\vec \beta} > u,
$$
where $\vec a \in \Re^{m}$ and $l < u$ are some pre-specified constants. One common choice is to set $\vec a =  (1,\dots,1)$ and 
$$
l = z_{t_1/2} \sqrt{\vec a' {\bf \Sigma}\vec a}, \qquad 
u = z_{1-t_1/2} \sqrt{\vec a' {\bf \Sigma}\vec a},
$$
to obtain a level $t_1$ aggregate test. In this section we will develop inference methods equivalent to the ones presented in the main body of the paper for inference after screening with linear aggregate tests. All of the theoretical results in this section are simple generalization or straightforward corollaries of the theory developed for inference after quadratic tests. 

In \S~\ref{sub-linear-poly} we state the Polyhedral Lemma for inference after selection with an aggregate test. In \S~\ref{sub-linear-hybrid} we define the hybrid p-values for testing after screening with a linear aggregate test and provide theoretical guarantees for their validity. The theoretical guarantee relies on the result  in \S~\ref{thm-cons-proof},  which identifies the parameters for which the conditional $p$-value is most conservative following selection by a linear test. In \S~\ref{sub-linear-conditional} we develop a formula for computing the conditional MLE via a line search. In \S~\ref{sub-linear-ci} we define the regime switching confidence interval for the linear aggregate testing case. Finally, \S~\ref{sub-linear-simulation} is a short simulation study. 

\subsection{The polyhedral Lemma for selection with a linear test}\label{sub-linear-poly}
In this section we develop the polyhedral Lemma for the inference after a linear aggregate testing problem. Let $\vec\eta \in \Re^{m}$ and suppose that we are interested in inferring on $\vec\eta' \vec\beta$. As in \S~\ref{app-thm1-proof} we write:
$$
\hat{\vec\beta} = \vec c\vec\eta'\hat{\vec\beta} + \vec W.
$$
Denote by $s_{ac}$ the sign of $\vec{a}'\vec {c}$. Recall that our selection criterion was
$$
\mathcal{S} := \left\{\vec a' \hat{\vec\beta} < l \cup \vec a'\hat{\vec\beta} > u\right\}
$$
and so the truncation for $\vec\eta'\hat{\vec\beta}$ conditional on $\vec W$ is given by
$$
s_{ac} \vec\eta'\hat{\vec\beta} < s_{ac} \frac{l - \vec a'\vec W}{\vec a'\vec c} = A(\vec W) , \qquad
s_{ac} \vec\eta'\hat{\vec\beta} > s_{ac} \frac{u - \vec a'\vec W}{\vec a' \vec c} = B(\vec W).
$$

\begin{theorem}
Let $\vec\eta'\vec{\hat\beta}$ be a linear combination of $\hat{\vec\beta}$, $l < u$ fixed selection thresholds and let $\vec W$ be as defined in Theorem \ref{thm1}. Then:
$$
\vec\eta' \hat{\vec\beta} | \mathcal{S}, \vec W \sim TN(\vec\eta \vec\beta, \vec\eta'{\bf\Sigma}\vec\eta, \mathcal{A}(\vec W))
$$
with
$$
\mathcal{A}(\vec W) = \left\{s_{ac} \vec\eta'\hat{\vec\beta} <  B(\vec W) 
\cup 
s_{ac} \vec\eta'\hat{\vec\beta} >  A(\vec W)
\right\}.
$$
\end{theorem}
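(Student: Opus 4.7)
The plan is to mimic the structure of the proof of Theorem~\ref{thm1} (via Lemma~\ref{lem1}), only now the quadratic form $\hat{\vec\beta}'{\bf K}\hat{\vec\beta}$ is replaced by the linear form $\vec a'\hat{\vec\beta}$, which makes the algebra strictly easier since no discriminant or quadratic roots appear. The key reduction is to decompose $\hat{\vec\beta}$ orthogonally (in the covariance-$\bf\Sigma$ sense) into a contribution from $\vec\eta'\hat{\vec\beta}$ and the residual part $\vec W$, and then rewrite the selection event as a constraint that involves only $\vec\eta'\hat{\vec\beta}$ once $\vec W$ is fixed.

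First I would write $\hat{\vec\beta} = \vec c\,\vec\eta'\hat{\vec\beta} + \vec W$ with $\vec c = ({\vec\eta}'{\bf\Sigma}{\vec\eta})^{-1}{\bf\Sigma}{\vec\eta}$, and compute $\mathrm{Cov}(\vec W,\vec\eta'\hat{\vec\beta}) = ({\bf I}_m - \vec c\vec\eta'){\bf\Sigma}\vec\eta = {\bf\Sigma}\vec\eta - \vec c(\vec\eta'{\bf\Sigma}\vec\eta) = \vec 0$. Joint normality then yields that $\vec W$ and $\vec\eta'\hat{\vec\beta}$ are independent, so conditioning on $\vec W$ does not change the marginal law $\vec\eta'\hat{\vec\beta}\sim N(\vec\eta'\vec\beta,\vec\eta'{\bf\Sigma}\vec\eta)$. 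Next I would rewrite the selection event: substituting the decomposition gives $\vec a'\hat{\vec\beta} = (\vec a'\vec c)\,\vec\eta'\hat{\vec\beta} + \vec a'\vec W$, so the event $\{\vec a'\hat{\vec\beta} < l\}\cup\{\vec a'\hat{\vec\beta} > u\}$ becomes two linear inequalities in the scalar $\vec\eta'\hat{\vec\beta}$ with coefficient $\vec a'\vec c$ and offset depending only on $\vec W$. The sign $s_{ac} = \mathrm{sign}(\vec a'\vec c)$ is precisely what is needed to keep the direction of the inequalities uniform when dividing through by $\vec a'\vec c$, yielding the claimed form of $\mathcal A(\vec W)$ with endpoints $A(\vec W)$ and $B(\vec W)$.

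Finally I would conclude by combining the two ingredients: conditional on $\vec W$, the marginal distribution of $\vec\eta'\hat{\vec\beta}$ is $N(\vec\eta'\vec\beta,\vec\eta'{\bf\Sigma}\vec\eta)$ by independence, and the additional conditioning on $\mathcal S$ reduces, by the rewritten event, to restricting $\vec\eta'\hat{\vec\beta}$ to the set $\mathcal A(\vec W)$. By definition of the truncated normal, this gives $\vec\eta'\hat{\vec\beta}\mid\mathcal S,\vec W\sim TN(\vec\eta'\vec\beta,\vec\eta'{\bf\Sigma}\vec\eta,\mathcal A(\vec W))$.

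The only subtlety — and the main obstacle, if there is one — is the degenerate case $\vec a'\vec c = 0$. Geometrically this corresponds to the aggregate test direction being $\bf\Sigma$-orthogonal to the contrast of interest, so the selection event is measurable with respect to $\vec W$ alone; conditioning on $\vec W$ then collapses $\mathcal A(\vec W)$ to either all of $\Re$ or the empty set, and one should either exclude this degenerate case or state the theorem with the convention that the statement holds almost surely on $\{\vec a'\vec c\neq 0\}$. Apart from this bookkeeping, everything else is a direct transcription of the polyhedral lemma argument with a linear, rather than quadratic, slicing.
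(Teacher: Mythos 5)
Your proposal is correct and follows essentially the same route as the paper: decompose $\hat{\vec\beta} = \vec c\,\vec\eta'\hat{\vec\beta} + \vec W$, use the zero covariance (hence independence under joint normality) of $\vec W$ and $\vec\eta'\hat{\vec\beta}$, substitute into $\vec a'\hat{\vec\beta} = (\vec a'\vec c)\,\vec\eta'\hat{\vec\beta} + \vec a'\vec W$, and divide through by $\vec a'\vec c$ with the sign $s_{ac}$ tracking the inequality directions. Your remark on the degenerate case $\vec a'\vec c = 0$ is a legitimate bookkeeping point the paper leaves implicit, but it does not change the substance of the argument.
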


\subsection{Hybrid p-values after linear aggregate testing}\label{sub-linear-hybrid}
As in the case of inference after aggregate testing with a quadratic test, we propose to combine p-values computed using the polyhedral lemma and p-values computed under the global-null in order to obtain a powerful post-selection test. In this section we address two types of aggregate tests, the symmetric two-sided aggregate test with $l = -u$, and the one-sided aggregate test with $l = -\infty, u = u$. We discuss the more general case of $-\infty <l \leq u <\infty$ in Section \ref{thm-cons-proof}.

Suppose that we are interested in testing the null hypothesis $H_{0\eta}: \vec\eta'\vec\beta =0$. Then, the event under the null the value of $\vec\beta$ influences the computation of the post-selection p-value. To see why, consider for example a case where we specify a value for $\vec\beta$ such that $E(\vec a' \vec W)$ is very large. Under such a parametrization we will have $P(\mathcal{S}) \approx 1$ and selection adjusted p-value for testing $H_{0\eta}$ will be identical to the unadjusted p-value. On the other hand, if we set $\vec\beta = \vec 0$ then the adjusted p-value will always be larger than the unadjusted one. In fact, for a two sided aggregate test the choice of $\vec\beta = \vec 0$ will yield the most conservative test possible.   
\begin{corollary}\label{the-corollary}
Suppose $H_{0\eta}$ was selected for inference via a linear aggregate test with $-\infty < -u <0 < u < \infty$ and that $\vec \beta$ satisfies $H_{0\eta}$. Define the two-sided post-selection p-value:  
$$
p_{\vec\beta}(b) = 2 \min\left(
Pr_{\vec\beta}(\eta'\hat{\vec\beta} \geq b|\mathcal{S}),
Pr_{\vec\beta}(\eta'\hat{\vec\beta}) \leq b| \mathcal{S})
\right).
$$
Then, 
\begin{equation*}
Pr_{\vec\beta}\left(p_{\vec 0}(\vec\eta'\hat{\vec\beta}) < t \middle |\mathcal{S}\right)
\leq
Pr_{\vec 0}\left(p_{\vec 0}(\vec\eta'\hat{\vec\beta})< t\middle| \mathcal{S}\right) 
=t, \qquad \forall t\in(0, 1).
\end{equation*}
\end{corollary}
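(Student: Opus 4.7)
The plan is to reduce the corollary to a stochastic-ordering statement for $|\vec\eta'\hat{\vec\beta}|\mid\mathcal{S}$ between the true parameter $\vec\beta$ and the origin, exploiting the symmetry of the selection event $\mathcal{S} = \{|\vec a'\hat{\vec\beta}|>u\}$ and the bivariate normal conditional structure of $(Y,Z) := (\vec\eta'\hat{\vec\beta},\vec a'\hat{\vec\beta})$. I would first observe that under $\vec\beta = \vec 0$ the pair $(Y,Z)$ is centered bivariate normal and $\mathcal{S}$ is symmetric about the origin, so $Y\mid\mathcal{S}$ is symmetric about $0$. This symmetry immediately delivers the right-hand equality $Pr_{\vec 0}(p_{\vec 0}(Y)<t\mid\mathcal{S}) = t$ and the identification $\{p_{\vec 0}(Y)<t\} = \{|Y|>q_t\}$, where $q_t$ is the upper $t/2$-quantile of $|Y|\mid\mathcal{S}$ computed under $\vec\beta = \vec 0$. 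It will then suffice to prove $Pr_{\vec\beta}(|Y|>q_t\mid\mathcal{S}) \leq Pr_{\vec 0}(|Y|>q_t\mid\mathcal{S})$ for every $\vec\beta$ with $\vec\eta'\vec\beta = 0$.

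The main computational step is the conditional-mean decomposition. I would write $Y = \alpha(Z-\mu_Z) + R$ with $\mu_Z = \vec a'\vec\beta$, $\alpha = \mathrm{Cov}(Y,Z)/\Var(Z)$, and $R$ independent of $Z$ (with $E_{\vec\beta}[R] = 0$ because $\vec\eta'\vec\beta = 0$), which yields
$$
Pr_{\vec\beta}(|Y|\leq q_t \mid Z=z) \;=\; h(z-\mu_Z),
$$
where $h(w) := Pr(|N(\alpha w,\sigma_R^{2})|\leq q_t)$ is even and strictly decreasing in $|w|$. Substituting $W = Z - \mu_Z \sim N(0,\Var(Z))$, the target inequality becomes $E[h(W)\mid W\in \tilde B] \geq E[h(W)\mid W\in B]$, where $B = \{|w|>u\}$ and $\tilde B = \{|w+\mu_Z|>u\}$. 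Since $h$ is monotone in $|W|$, this reduces to the key lemma that for a centered Gaussian $W$, $|W|$ conditioned on $W\in\tilde B$ is stochastically dominated by $|W|$ conditioned on $W\in B$.

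I expect this key lemma to be the main obstacle, and I would prove it by direct case analysis on the threshold $s\geq 0$, taking $\mu_Z>0$ without loss of generality by symmetry. For $s\leq u-\mu_Z$ both conditional survival functions equal one; for $u-\mu_Z<s\leq u$ the $B$-conditional survival function is one while the $\tilde B$-conditional one is strictly less. For $u<s\leq u+\mu_Z$, the identities $Pr(|W|>s,\tilde B) = Pr(W>s)+Pr(W<-u-\mu_Z)$ and $Pr(|W|>s,B) = Pr(W>s)+Pr(W<-s)$ combine with the bounds $Pr(W<-u-\mu_Z)\leq Pr(W<-s)$ (valid because $s\leq u+\mu_Z$) and $Pr(B)\leq Pr(\tilde B)$ to yield the desired ratio bound. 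Finally, for $s>u+\mu_Z$ the set $\{|W|>s\}$ is contained in both $B$ and $\tilde B$, so the comparison collapses to $Pr(B)/Pr(\tilde B) \leq 1$, and this last inequality follows from computing $\tfrac{d}{d\mu}\big[1-\Phi(u-\mu)+\Phi(-u-\mu)\big] = \phi(u-\mu)-\phi(u+\mu)$, which vanishes at $\mu=0$ and is strictly positive for $\mu>0$. Combining the resulting stochastic ordering with the monotonicity of $h$ gives the sought inequality, completing the proof.
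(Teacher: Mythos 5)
Your proof is correct, but it takes a genuinely different route from the paper's. The paper obtains this corollary as a special case of Theorem \ref{thm-linear-conservative} for asymmetric thresholds: there one conditions on $w=\vec a'\vec W$, the component independent of $\vec\eta'\hat{\vec\beta}$, writes the rejection probability as an integral of $Pr(p<t\mid \mathcal S, w)$ against the conditional law of $w$ given $\mathcal S$, and shows by differentiating in $E(w)$ that this probability is maximized at the symmetric point $E(w)=(l+u)/2$ (equal to $0$ here). You instead condition on $Z=\vec a'\hat{\vec\beta}$ itself --- the variable that determines $\mathcal S$ --- so that the conditional rejection probability becomes a deterministic function $1-h(z-\mu_Z)$ that is even and increasing in $|z-\mu_Z|$, and the whole claim reduces to stochastic dominance of $|W|\mid W\in B$ over $|W|\mid W\in\tilde B$ for a centered Gaussian $W$ with $B=\{|w|>u\}$ and $\tilde B=\{|w+\mu_Z|>u\}$. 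This buys you a global, elementary argument where the paper only verifies a first-order condition and asserts uniqueness of the maximizer; conversely, the paper's variational argument extends to the asymmetric case $l\neq -u$, which your symmetric-interval lemma does not directly cover. One small inaccuracy: in the range $u<s\leq u+\mu_Z$ your identity $Pr(|W|>s,\,W\in\tilde B)=Pr(W>s)+Pr(W<-u-\mu_Z)$ omits the piece $(u-\mu_Z,-s)$ when $\mu_Z>2u$. This is harmless, since for every $s>u$ the inequality you need follows already from $Pr(|W|>s,\,W\in\tilde B)\leq Pr(|W|>s)=Pr(|W|>s,\,W\in B)$ together with $Pr(\tilde B)\geq Pr(B)$, which in fact collapses your four cases into two.
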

Corollary \ref{the-corollary} is a special case of a more general result described in Section \ref{thm-cons-proof}

In the case of screening with a one-sided test, there does not exist a most conservative p-value. However, the choice of $\vec\beta =\vec 0$ is an asymptotically conservative one. To see why, consider three case. The first and easiest case is $\vec\beta = \vec 0$, here our test will be an exact one. If $\vec\beta$ is such that $E(\vec a' \vec W) < 0$ then the probability of selection decreases to zero at an exponential rate under the null, making this case irrelevant from an asymptotic point of view. Finally if the true $\vec\beta$ is such that $E(\vec a' \vec W) > 0$ then selection occurs with probability approaching $1$ regardless of the value of $\vec\eta'\vec\beta$ and so any selection adjustment is asymptotically conservative. 

To summarize, we define the hybrid p-value for testing $H_{0j}:\beta_j = 0$. 
\begin{enumerate}
\item Compute the a p-value under the global-null $p'_{j,GN}$. 
\item Compute an adjusted p-value based on the polyhedral lemma $p'_{j}$.
\item Set $p'_{j,hybrid} = 2\min(p'_j,p'_{j,GN})$.
\end{enumerate}
Because both the polyhedral and global-null p-values are (asymptotically) valid, the hybrid p-value is also a valid p-value. 

\begin{remark} Sampling from the conditional distribution. \emph{
Sampling following selection via a linear aggregate test is far simpler than sampling following selection using a quadratic test. Sampling can be done in two steps, the first involves sampling from the univariate truncated normal distribution
$$
\bar{b} \sim f_{\vec\beta}\left(\vec a' \hat{\vec\beta} |\mathcal{S}\right)
$$
and the second is sampling from the multivariate normal distribution conditional on the contrast
$$
\vec b \sim 
f_{\vec\beta}\left(\hat{\vec\beta}|\mathcal{S}, \vec a' \hat{\vec\beta} = \bar{b}\right) = 
f_{\vec\beta}\left(\hat{\vec\beta}|\vec a' \hat{\vec\beta} = \bar{b}\right).
$$
}
\end{remark}

\subsection{Conditional estimation after linear aggregate testing}\label{sub-linear-conditional}
In this section we discuss the computation of the conditional-MLE for a model which was selected via a linear aggregate test. The distribution of $\vec a' \hat{\vec\beta}$ is determined by a single unknown parameter $\vec a ' \vec\beta$ and it is therefore possible to cast the problem of computing the multivariate conditional MLE as a line-search problem. Let $\delta := \vec a'\vec\beta$. The conditional MLE for the linear aggregate testing problem is
$$
\tilde{\vec\beta} = \arg\max_{\vec\beta} \ell(\vec\beta) - \log Pr_{\vec\beta} (\mathcal{S}).
$$
Once again we re-write the optimization problem as a nested optimization problem:
\begin{align*}
&\max_{\vec\beta} \ell(\vec\beta) - \log Pr_{\vec\beta} (\mathcal{S}) \\
=& \max_{\delta}\max_{\vec\beta: a' \vec\beta = \delta} \ell(\vec\beta) -\log Pr_{\delta} (\mathcal{S}).
\end{align*}
For a fixed $\delta$, the solution to the inner problem is given by:
$$
\tilde\beta(\delta) = \hat{\vec\beta}  + \frac{\delta - \vec a' \hat{\vec\beta}}{\vec a' {\bf\Sigma} \vec a} {\bf \Sigma} \vec a,
$$
and the MLE is given by $\tilde{\vec\beta}(\tilde{\delta})$ with:
$$
\tilde{\delta} = \arg\max_{\delta} \ell(\tilde{\vec\beta}(\delta)) - \log Pr_{\delta}(\mathcal{S}).
$$

\subsection{Confidence intervals after linear aggregate testing}\label{sub-linear-ci}
For inference after selection with linear aggregate tests we construct confidence intervals using Procedure \ref{proc-postselectionCI}. The use of regime switching confidence intervals allows us to safely assume that $E(\vec a'\vec W) = 0$ as asymptotic validity is provided by the fact that if $\vec a' \vec \beta \neq 0$ then selection occurs with probability approaching one triggering the switch in regimes and if $\vec a' \vec\beta = 0$ then test-inversion confidence intervals based on the global-null assumption are exact. 

\begin{figure}[t]
\label{fig-est-ci-linear}
\begin{center}
\includegraphics[width= 14 cm]{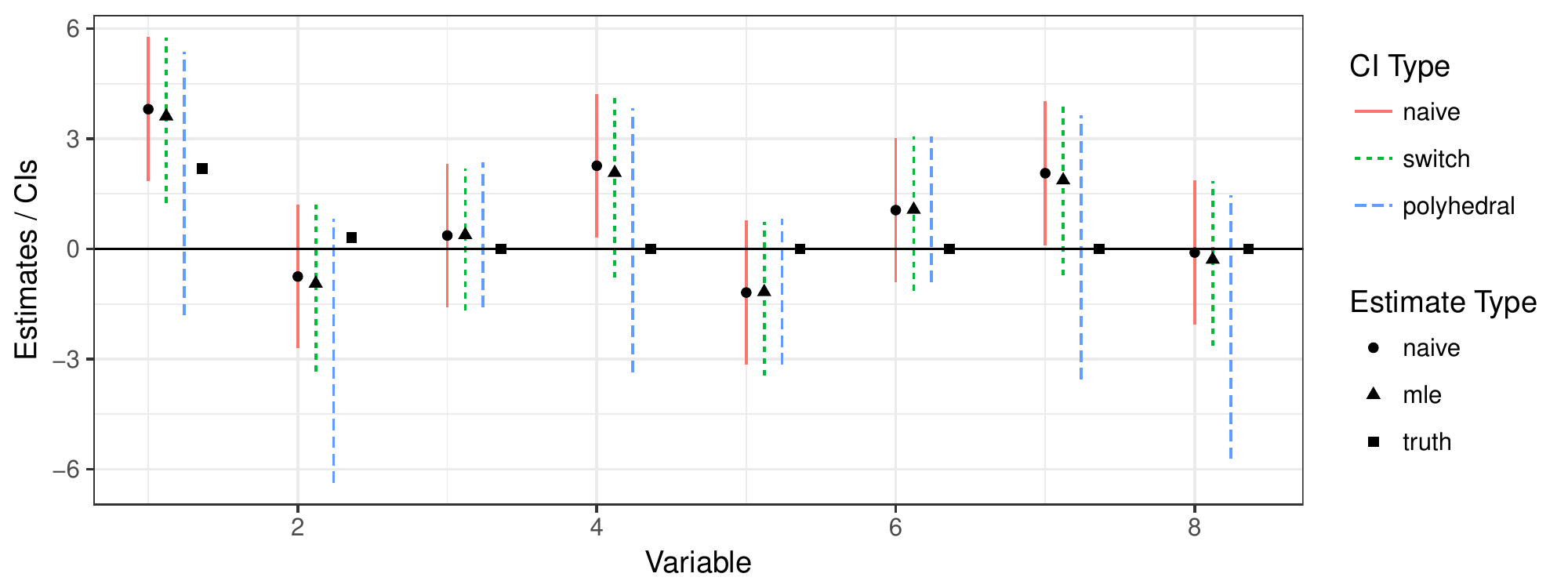}
\caption{Point estimates and confidence intervals for the artificial data example described in Example \ref{linear-ci-example}. The naive point estimate as marked as a circle, the conditional MLE as a triangle and the true value of the parameter is marked as a square. The confidence intervals are the naive (solid red line), hybrid (dotted green line) and polyhedral (dashed blue line).}
\end{center}
\end{figure}

\begin{example}\label{linear-ci-example}\emph{
In order to demonstrate the proposed post-selection inference methods, we conduct an experiment with an artificial dataset. We generate $X$ and $\hat\beta$ as in \S~\ref{sec-sim} setting $\vec \beta = (2.2, 0.3, 0, \dots, 0)$. We base our aggregate test on the contrast $\vec a = (1, 1, -1, 1, -1, -1, 1, 1)$ and set our lower and upper thresholds in such a way as to obtain a level $t_1 = 0.01$ test. The results of the analysis for the artificial dataset are presented in Figure \ref{fig-est-ci-linear}. Notice that in the linear case the conditional MLE does not shrink all of the variables, but may inflate some and shrink others. In inference after aggregate testing the polyhedral confidence intervals have a tendency to be much larger than ones computed under the global-null hypothesis, meaning that the regime switching confidence intervals tend to be more efficient. 
}
\end{example}

\begin{figure}[t]
\begin{center}
\includegraphics[width= 14 cm]{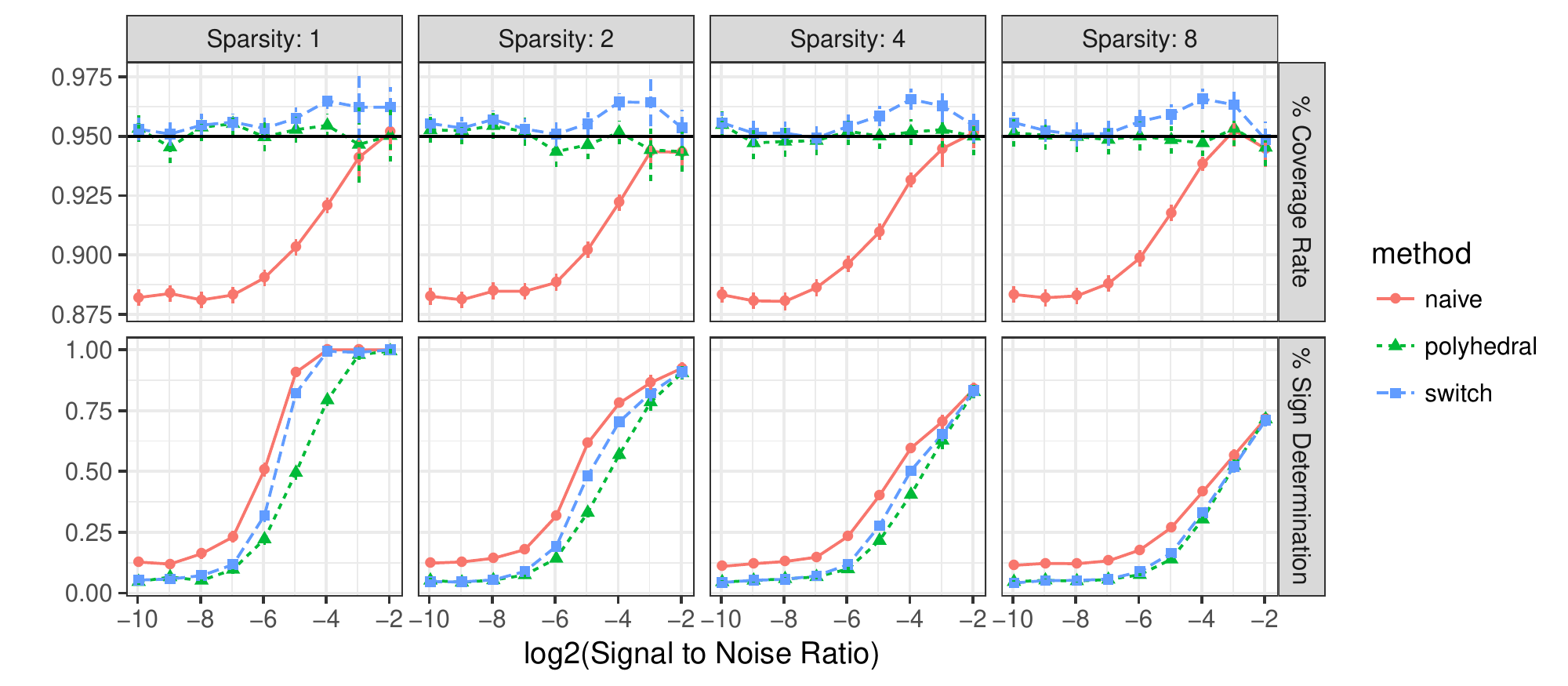}
\caption{Coverage rates and power to determine the sign of confidence intervals constructed after aggregate testing. We plot results rates for the naive unadjusted confidence intervals (solid red line), polyhedral confidence intervals (dotted green line) and regime switching intervals with $t_2 = t_1\alpha^2$ (dashed blue line).}\label{fig-cover-sim-linear}
\end{center}
\end{figure}

\subsection{Simulations for inference after linear aggregate testing}\label{sub-linear-simulation}
To conclude this section we conduct a simulation study with the goal of verifying our inference methods. We generate data as in \S~\ref{sub-cover}. We keep datasets that pass a symmetric linear aggregate test at a $t_1 = 0.001$ level using a contrast $\vec a$, the coordinates of which are equal to $\text{sign}(\beta_j)$ for $\beta_j \neq 0$ and are set to $1$ or $-1$ at random for $\beta_j = 0$. The results are presented in Figure \ref{fig-cover-sim-linear}. As in the case of screening with the Wald test, our proposed inference methods achieve the desired coverage rates while the naive confidence intervals tend to have a lower than nominal coverage rate when the signal to noise ratio is low. As could be expected, the regime switching confidence intervals have better power than the polyhedral confidence intervals to determine the sign, especially when the true model is sparse. 

\subsection{Asymmetric aggregate tests}\label{thm-cons-proof}
In this section we discuss the problem of computing a conservative p-value when $-\infty <l <u<\infty$ and $l\neq -u$. While screening with an asymmetric two-sided case is not common in practice, the analysis of this problem yields an interesting and somewhat surprising result. Specifically, we find that the most conservative parametrization of $\vec\beta$ for testing $H_{0\eta}:\vec\eta'\vec\beta$ is such that $\vec\eta' \vec\beta = 0$ and
\begin{equation}\label{l-u-condition}
E(\vec a' \vec W) = \frac{l + u}{2}.
\end{equation}
This parametrization can be used to compute hybrid p-values after testing with an asymmetric two-sided aggregate test. However, we note that if the test is highly imbalanced (e.g. $l = -10, u =1$) then the most conservative parametrization will yield a very conservative test and it might be preferable to use the asymptotically valid global-null p-value. 

\begin{theorem}\label{thm-linear-conservative}
Let $\hat{\vec\beta} \sim N(\vec\beta, {\bf \Sigma})$ and suppose that $a\neq \eta$, $\vec\eta' \vec\beta = 0$ and that $-\infty < l, u < \infty$ and let:
$$
p_{\vec\beta}(b) = 2 \min\left(
Pr_{\vec\beta}(\eta'\hat{\vec\beta} \geq b|\mathcal{S}),
Pr_{\vec\beta}(\eta'\hat{\vec\beta}) \leq b| \mathcal{S})
\right).
$$
Then, for $\tilde{\vec\beta}$ that also satisfies \eqref{l-u-condition} we have
\begin{equation}\label{pval-inequality}
Pr_{\vec\beta}\left(p_{\tilde{\vec\beta}}(\vec\eta'\hat{\vec\beta}) < t \middle |\mathcal{S}\right)
\leq
Pr_{\tilde{\vec\beta}}\left(p_{\tilde{\vec\beta}}(\vec\eta'\hat{\vec\beta})< t\middle| \mathcal{S}\right) 
=t, \qquad \forall t\in(0, 1).
\end{equation}
If $a = \eta$ then any parametrization of $\vec\beta$ that satisfies $\vec\eta'\vec\beta=0$ yields a valid p-value.
\end{theorem}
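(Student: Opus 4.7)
My plan is to reduce the claim to a one-dimensional dominance problem through the polyhedral decomposition, exploit a reflection symmetry enjoyed by the balanced reference $\tilde{\vec\beta}$, and then attack the resulting inequality by a case analysis combined with the log-concavity of the normal density. First I decompose $\hat{\vec\beta} = c\,\vec\eta'\hat{\vec\beta} + \vec W$ with $c = ({\bf\Sigma}\vec\eta)/(\vec\eta'{\bf\Sigma}\vec\eta)$ as in Lemma~\ref{lem1}, set $\xi := \vec\eta'\hat{\vec\beta}$, $\omega := \vec a'\vec W$, and $\rho := \vec a'c$. Joint normality together with $\mathrm{cov}(\xi,\vec W)=0$ gives $\xi\perp\vec W$, hence $\xi\perp\omega$, and $\mathcal{S}$ becomes $\{\rho\xi+\omega<l\}\cup\{\rho\xi+\omega>u\}$. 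Under the null $\vec\eta'\vec\beta=0$ the variables are independent with $\xi\sim N(0,\vec\eta'{\bf\Sigma}\vec\eta)$ and $\omega\sim N(\vec a'\vec\beta,\sigma_\omega^2)$ for a known $\sigma_\omega^2$. If $\vec a=\vec\eta$ then $\vec\eta'c=1$ forces $\omega\equiv 0$, so $\mathcal{S}$ depends on the data only through $\xi$ and the law $\xi\mid\mathcal{S}$ does not depend on $\vec\beta$; any $\tilde{\vec\beta}$ with $\vec\eta'\tilde{\vec\beta}=0$ then yields an exact $p$-value, handling the last sentence of the theorem.

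For the main claim with $\vec a\neq\vec\eta$ and $\vec a'\tilde{\vec\beta}=(l+u)/2$, the involution $\tau(\xi,\omega)=(-\xi,(l+u)-\omega)$ preserves the joint density of $(\xi,\omega)$ under $\tilde{\vec\beta}$, because $\xi$ is centered at $0$ and $\omega$ at $(l+u)/2$. The same involution maps $\mathcal{S}$ onto itself since $\rho\tau_1(\xi)+\tau_2(\omega)=(l+u)-(\rho\xi+\omega)$ interchanges the two tails $\{\cdot<l\}$ and $\{\cdot>u\}$. Therefore $\xi\mid\mathcal{S}$ is symmetric around $0$ under $\tilde{\vec\beta}$, the two-sided $p$-value collapses to $p_{\tilde{\vec\beta}}(b)=Pr_{\tilde{\vec\beta}}(|\xi|\geq|b|\mid\mathcal{S})$, and its rejection region at level $t$ is $\{|\xi|\geq q_t\}$ with $q_t$ defined by $Pr_{\tilde{\vec\beta}}(|\xi|\geq q_t\mid\mathcal{S})=t$. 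Inequality~\eqref{pval-inequality} is thus equivalent to showing that $Pr_{\vec\beta}(|\xi|\geq q\mid\mathcal{S})\leq Pr_{\tilde{\vec\beta}}(|\xi|\geq q\mid\mathcal{S})$ for every $q>0$ and every $\vec\beta$ satisfying $\vec\eta'\vec\beta=0$.

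To establish this dominance I condition on the selection variable $Y=\rho\xi+\omega$. Bivariate normality gives $\xi\mid Y=y\sim N(\kappa(y-\mu_Y),\sigma_\xi^2(1-r^2))$ with $\mu_Y=\vec a'\vec\beta$ and $\kappa,r$ free of $\vec\beta$, so $Pr(|\xi|\geq q\mid Y=y)=H(|y-\mu_Y|)$ for an even function $H$ that is strictly increasing on $[0,\infty)$. Centering $Y'=Y-\mu_Y\sim N(0,\sigma_Y^2)$ and writing $\Delta:=\mu_Y-(l+u)/2$ and $d:=(u-l)/2>0$, the event $\mathcal{S}$ reads $\{|Y'+\Delta|>d\}$ and the inequality reduces to
\[
R(\Delta):=\frac{A(d+\Delta)+A(d-\Delta)}{B(d+\Delta)+B(d-\Delta)}\ \leq\ R(0),
\]
with $A(x)=\int_x^\infty H(|y|)f(y)\,dy$, $B(x)=\int_x^\infty f(y)\,dy$, and $f$ the $N(0,\sigma_Y^2)$ density. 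Symmetry of $f$ makes $R$ even in $\Delta$, so it suffices to treat $\Delta\geq 0$.

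The final step is the main obstacle. Set $J(x):=A(x)-R(0)B(x)$, which satisfies $J(d)=0$ and $J'(x)=[R(0)-H(|x|)]f(x)$, so the claim is equivalent to $S(\Delta):=J(d+\Delta)+J(d-\Delta)\leq 0$ for $\Delta\geq 0$. Introduce $G(y):=[H(|y|)-R(0)]f(y)$, an even function with a single sign change on $[0,\infty)$ at the point $x^+$ solving $H(x^+)=R(0)$, and use the identity $\int_d^\infty G(y)\,dy=0$ to rewrite $S(\Delta)=\int_{d-\Delta}^d G-\int_d^{d+\Delta} G$. For $\Delta\leq d$ the unimodal profile of $|G|$ on $[0,\infty)$ (peaking at $0$ and strictly decreasing up to $x^+$) yields $\int_d^{d+\Delta}|G|\leq|G(d)|\Delta\leq\int_{d-\Delta}^d|G|$, after a subcase split according to whether $d+\Delta\leq x^+$, which gives $S(\Delta)\leq 0$ directly. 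For $\Delta>d$ the interval $(d-\Delta,d)$ wraps around the origin; here I use the symmetry of $G$ together with the identity $\int_d^\infty G=0$ to rewrite $S(\Delta)$ in terms of integrals over $[0,\Delta-d]$, $[0,d]$, and $[d,d+\Delta]$, and then apply the log-concavity of $f$ (equivalently, the monotone-hazard property of the normal) to bound the ratios of tail integrals and complete the estimate. The log-concavity of the Gaussian density is the essential analytic ingredient, ensuring that the faster decay of $f$ on the right just compensates for the slower decay on the left so that $S(\Delta)$ remains non-positive throughout.
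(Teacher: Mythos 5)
Your overall skeleton matches the paper's: decompose $\hat{\vec\beta}$ along $\vec\eta$, observe that under the balanced parametrization $E(\vec a'\vec W)=(l+u)/2$ the pair (selection event, law of $\vec\eta'\hat{\vec\beta}$) is invariant under the reflection that flips $\vec\eta'\hat{\vec\beta}$ and reflects $\vec a'\vec W$ about $(l+u)/2$, conclude that the rejection region is $\{|\vec\eta'\hat{\vec\beta}|>b(t)\}$, and then show the balanced point is the worst case over the null; your treatment of $\vec a=\vec\eta$ via $\vec\eta'\vec W\equiv 0$ is exactly the paper's. Where you genuinely diverge is the last step. The paper conditions on $w=\vec a'\vec W$, differentiates the rejection probability in $E(w)$, shows the derivative vanishes at $E(w)=(l+u)/2$ by an antisymmetry argument, and then simply asserts that this critical point is the unique maximum. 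You instead condition on the selection statistic $Y=\rho\xi+\omega$ itself (so that $\mathcal S$ is $Y$-measurable and the conditional rejection probability given $Y=y$ becomes an explicit increasing function $H(|y-\mu_Y|)$), reduce to the ratio inequality $R(\Delta)\le R(0)$, and attack it by direct domination of integrals of the single-sign-change function $G=[H-R(0)]f$. This buys an argument that is in principle fully rigorous and global in $\Delta$, rather than a first-order condition plus an unverified uniqueness claim; in that sense your route is stronger than the paper's.

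Two caveats on your execution. First, your case $\Delta>d$ is only sketched, and the tool you name, log-concavity of $f$, is not actually what closes it. What does close it is elementary: from $\int_d^\infty G=0$ one gets $\int_0^dG=\int_0^\infty G=I$ with $|I|=\int_0^d|G|$, so the target $S(\Delta)\le 0$ becomes $\int_{\Delta-d}^{\Delta+d}(-G)\le 2\int_0^d(-G)$; this follows from the facts you already established, namely that $-G\ge 0$ and is decreasing on $[0,x^+]$ (because $H$ is increasing and $f$ is decreasing on $[0,\infty)$) and $-G\le 0$ beyond $x^+$, together with $x^+>d$. No log-concavity is needed anywhere. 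Second, in the case $\Delta\le d$ with $d+\Delta>x^+$ you should use $\int_d^{d+\Delta}G=-\int_{d+\Delta}^\infty G\le 0$ and bound $-\int_d^{d+\Delta}G\le\int_d^{x^+}|G|\le|G(d)|(x^+-d)<|G(d)|\Delta\le\int_{d-\Delta}^d|G|$; your one-line bound $\int_d^{d+\Delta}|G|\le|G(d)|\Delta$ is not the right quantity once $G$ has changed sign inside the interval. With those repairs the proof is complete and, in my view, more satisfactory than the one in the paper.
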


Before we get into the technicalities of the proof of Theorem \ref{thm-linear-conservative}, let us break down the component of equation \eqref{pval-inequality}. We have two quantities that depend on the parameter values under which we evaluate the p-value. The first is the p-value itself $p_{\vec\beta}(b)$ which effectively determines the threshold for declaring that a test is rejected at a level $t$ and it is evaluated under the same parameter value on both sides of the inequality. The second component is the parameter value under which we evaluate the probability of crossing a threshold $Pr_{\vec\beta}(p_{\vec\beta}({\vec\eta}'\hat{\vec\beta}) < t)$ which determines under what set of parameters we evaluate the probability of crossing the thresholds determined by $p_{\vec\beta}$. If we can show that equation \eqref{pval-inequality} holds then this implies that evaluating the probability of crossing the threshold at $\tilde{\vec\beta}$ is the most conservative, making our procedure conservative.

Assume w.l.o.g that $a'c = 1$. We begin by noting that the joint distribution of $w := \vec a \vec W$ and $\eta\hat{\vec\beta}$ is that of independent normal vector, and so, examining the marginal density of $w$ under the null and the assumption that $E(w) = (l + u)/2$, we can see that $w$ has a symmetric distribution about $(l + u) /2$:
\begin{align*}
f(w|\mathcal{S}) &= 
\frac{P(\mathcal{S}|w)}{P(\mathcal{S})} \varphi(w) \\ &=
\frac{P(\{\vec\eta' \hat{\vec{\beta}} < l - w\}\cup \{ \vec\eta' \hat{\vec\beta} >  u - w\})}{P(\mathcal{S})} \varphi(w) \\ &=
\frac{P(\{\vec\eta' \hat{\vec\beta} < \frac{l-u}{2} - (w - \frac{l+u}{2})\}\cup \{ \vec\eta' \hat{\vec\beta} >  \frac{u-l}{2} - (w - \frac{l+u}{2})\})}{P(\mathcal{S})}
\varphi(w)
\end{align*}
and so, the truncation has a symmetric distribution about $0$ in the sense that
$$
A(\vec W) = \frac{l-u}{2} - \left(w - \frac{l+u}{2}\right)=^D   
-\left(\frac{u - l}{2} - \left(w - \frac{l+u}{2}\right)\right)=
-B(\vec W).
$$
Thus, there exists a constant $b(t)$ such that:
\begin{align*}
Pr_{\tilde{\vec\beta}}(p_{\tilde{\vec\beta}} < t | \mathcal{S}) = 
Pr_{\tilde{\vec\beta}}(\vec\eta'\hat{\vec\beta} > b(t)| \mathcal{S}) + 
Pr_{\tilde{\vec\beta}}(\vec\eta'\hat{\vec\beta} < -b(t)| \mathcal{S}).
\end{align*}

Fixing a value for $w$, we have
\begin{align}\label{eq-cond-on-w}
Pr_{H_0}(p_{\tilde{\vec\beta}}(\vec\eta'\hat{\vec\beta} )< t| \mathcal{S}, \vec a' \vec W = w)
&=
\frac{Pr(\vec\eta'\hat{\vec\beta} < -b(t), \mathcal{S}(w)) + Pr(\vec\eta'\hat{\vec\beta} > b(t),\mathcal{S}(w))}
{P(\mathcal{S}(w))} \\
\mathcal{S}(w) &:= \{\vec\eta'\hat{\vec\beta} < l - w\}\cup\{\vec\eta'\hat{\vec\beta} > u - w\} 
\nonumber
\end{align}
and 
$$
Pr_{\vec\beta}(p_{\tilde{\vec\beta}}(\vec\eta'\hat{\vec\beta} )< t| \mathcal{S}) = 
\int_\mathcal{\Re} Pr_{H_0}(p_{\tilde{\vec\beta}}(\vec\eta'\hat{\vec\beta} )< t| \mathcal{S}, \vec a' \vec W = w) f_{\vec\beta}(w|\mathcal{S})dw.
$$
Notice that \eqref{eq-cond-on-w} is symmetric about $w = (l + u) / 2$. 

Taking a derivative, 
\begin{align}\label{eq-w-deriv}
\frac{\partial}{\partial E(w)} Pr_{\vec\beta}(p_{\tilde{\vec\beta}}(\vec\eta'\hat{\vec\beta} )< t| \mathcal{S}) &= \\
&\int_{\Re}Pr_{H_0}(p_{\tilde{\vec\beta}}(\vec\eta'\hat{\vec\beta} )< t| \mathcal{S}, \vec a' \vec W = w)  \frac{\partial}{\partial E(w)}f(w|\mathcal{S}) dw. \nonumber
\end{align}
The inner derivative equals:
$$
\frac{\partial}{\partial E(w)} f(w|\mathcal{S}) = 
P(\mathcal{S}|w) \varphi(w)
\frac{w - E(w|\mathcal{S})}
{\sigma^{2}_wP(\mathcal{S})}.
$$
The derivative in \eqref{eq-w-deriv} equals zero at $E(w) = (l + u)/2$ because for such a parameter value $E(w|\mathcal{S}) = (l+u)/2$, $f'(w - (l + u)/2|S) = -f'((l + u)/2 - w|S)$ and  
$$
Pr_{H_0}(p_{\tilde{\vec\beta}}(\vec\eta'\hat{\vec\beta} )< t| \mathcal{S}, \vec a' \vec W = w - (l + u) / 2) = Pr_{H_0}(p_{\tilde{\vec\beta}}(\vec\eta'\hat{\vec\beta} )< t| \mathcal{S}, \vec a' \vec W = (l + u)/2 - w),
$$
this is also the only maximum because the inner derivative is symmetric only at $E(w) = (l + u)/ 2$.

Finally, if $\vec a=\vec\eta$ then $\vec a' \vec W=0$ by definition and therefore the distribution of $\vec\eta'\hat{\vec\beta}$ is always truncated normal constrained to $(-\infty, l]\cup[u,\infty)$.
\qed

 \end{document}